\setlist[enumerate]{label*=(\arabic*)}
\Crefname{enumi}{}{}
\crefname{enumi}{}{}
\newtheorem{theorem}{Theorem}[section]
\newtheorem{conjecture}[theorem]{Conjecture}
\newtheorem{lemma}[theorem]{Lemma}
\newtheorem{remark}[theorem]{Remark}
\newtheorem{proposition}[theorem]{Proposition}
\DeclareMathOperator{\MCA}{\mathcal{A}}                               %%%   
\DeclareMathOperator{\MBQ}{\mathbb{Q}}                               %%%    
\DeclareMathOperator{\MCK}{\mathcal{K}}                               %%%    
\DeclareMathOperator{\dupl}{\mathsf{dupl}}                               %%%   
\DeclareMathOperator{\excl}{\mathsf{excl}}                              %%%   
\DeclareMathOperator{\rt}{\mathsf{rt}}                               %%%   
\DeclareMathOperator{\Span}{\mathsf{span}}                               %%%   
\DeclareMathOperator{\kk}{\mathbf{k}}                               %%%   
\DeclareMathOperator{\cone}{\mathsf{cone}}                               %%%   
\DeclareMathOperator{\chr}{\mathsf{1}}                               %%%   
\DeclareMathOperator{\weakc}{\mathsf{wc}}                               %%%   
\DeclareMathOperator{\strc}{\mathsf{sc}}                               %%%   
\DeclareMathOperator{\id}{\mathsf{id}}                               %%%   
\DeclareMathOperator{\Sink}{\mathsf{SinkC}}                               %%%   
\DeclareMathOperator{\Strong}{\mathsf{SCC}}                               %%%   
\DeclareMathOperator{\Weak}{\mathsf{WCC}}                               %%%   
\DeclareMathOperator{\TT}{\mathsf{T}}                               %%%   
\DeclareMathOperator{\transLen}{\mathsf{len}}                               %%%   
\DeclareMathOperator{\limit}{\mathsf{lim}}                               %%%   
\DeclareMathOperator{\Sym}{\mathsf{Sym}}                               %%%   
\DeclareMathOperator{\ST}{\mathbb{ST}}                               %%%   
\DeclareMathOperator{\dd}{\mathsf{d}}                               %%%   
\DeclareMathOperator{\Dim}{\mathsf{dim}}                               %%%   
\date{} 
\def\theenumi{\@arabic\c@enumi}
\begin{document}

\title{A quadratic upper bound on the reset thresholds of synchronizing automata containing a transitive permutation group}
\author{Yinfeng Zhu}
\begin{abstract}
For any synchronizing $n$-state deterministic automaton, \v{C}ern\'{y} conjectures the existence of a synchronizing word of length at most $(n-1)^2$.
We prove that there exists a synchronizing word of length at most $2n^2 - 7n + 7$ for every synchronizing $n$-state deterministic automaton that satisfies the following two properties: 1. The image of the action of each letter contains at least $n-1$ states; 2. The actions of bijective letters generate a transitive permutation group on the state set.
\end{abstract}

\keywords{ \v{C}ern\'{y} conjecture, deterministic finite automaton, permutation group, reset threshold, synchronizing automaton}

\maketitle

\section{Introduction}

\subsection{Synchronizing automata and \v{C}ern\'{y} Conjecture}

Let $Q$ be a set.
Denote the set of all mappings from $Q$ to itself by $\TT(Q)$.
For the purposes of this article, an \emph{automaton} $\MCA$ is a triple $(Q, \Sigma, \delta)$ where $Q$ and $\Sigma$ are two finite sets, $\delta$ is a mapping from $\Sigma$ to $\TT(Q)$.
The elements of $Q$ are called \emph{states} of $\MCA$; the elements of $\Sigma$ are called letters of $\MCA$; and $\delta$ is called the \emph{transition function} of $\MCA$.
For a mapping $f : X \to Y$ and $x \in X$, we denote the value of $f$ at $x$ by $x.f$ or $f(x)$.
When the transition function $\delta$ is clear from the context, to simplify notations, $q.(\delta(a))$ will be shortened to $q.a$ where $q \in Q$ and $a \in \Sigma$. For subsets $P \subseteq Q$ and $A \subseteq \Sigma$, write $P.A$, or $P.a$ if $A = \{a\}$, for the set $\{p.a : p \in P, a \in A\}$.

Let $X$ be a set. Finite sequences over $X$ (including the empty sequence denoted by $\epsilon$) are called \emph{words}. For each nonnegative integer $i$, write  $X^{i}$ ($X^{\le i}$, respectively) for the set of words of length $i$ (at most $i$, respectively). Denote the set of all words over $X$ by $X^*$.

The transition function $\delta$ extends to the mapping of the set of finte words $\Sigma^*$ on $\TT(Q)$ (still denoted by $\delta$) via the recursion: $q.\epsilon = q$ and $q.(wa) = (q.w).a$ for every $w \in \Sigma^*$, $a \in \Sigma$ and $q \in Q$. 

Let $\MCA = (Q, \Sigma, \delta)$ be an automaton.
A word $w \in \Sigma^*$ is a \emph{reset word} if $|Q.w| = 1$. An automaton that admits a reset word is called a \emph{synchronizing} automaton. The minimum length of reset words for $\MCA$ is called the \emph{reset threshold} of $\MCA$, denoted $\rt(\MCA)$.
The following conjecture is the most famous conjecture of synchronizing automata.
\begin{conjecture}[\v{C}ern\'{y}-Starke] \label{conj:cerny}
	Let $\MCA$ be an $n$-state synchronizing automaton. Then $\rt(\MCA) \le (n-1)^2$.
\end{conjecture}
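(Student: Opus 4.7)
The plan is to prove \Cref{conj:cerny} by the classical greedy image-reduction argument: starting from $Q$, iteratively find words that shrink the reachable subset. Precisely, one would show that for every $S \subseteq Q$ with $|S| \geq 2$ there exists a word $w \in \Sigma^*$ with $|S.w| < |S|$ and $|w| \leq 2(n-|S|)+1$. Concatenating such words as $|S|$ drops from $n$ down to $2$ produces a reset word of total length at most $\sum_{k=2}^{n}\bigl(2(n-k)+1\bigr) = (n-1)^2$.

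First I would pass to the dual picture of extending sets: a nonempty set $T \subsetneq Q$ is \emph{extendable} by a word $u$ if $\{q \in Q : q.u \in T\}$ strictly contains $T$. The classical argument of Frankl and Pin shows that in a synchronizing automaton every proper subset is extendable by a word of length at most $\binom{n-|T|+1}{2}$; summing over $|T| = 1, 2, \ldots, n-1$ yields the well-known $n(n^2-1)/6$ bound. To hit the quadratic target one needs to improve this single extension step from quadratic to linear in $n-|T|$, which would tighten the overall bound to $(n-1)^2$ via the telescoping sum above.

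Second, I would carry out the standard reductions: assume the underlying digraph is strongly connected (otherwise restrict to a sink strongly connected component, since every long enough image lands inside it), and separate the alphabet into bijective letters — which generate a permutation group $G \leq \Sym(Q)$ — and strictly non-injective letters. A single non-bijective letter already shrinks $Q$ by at least one, so the difficulty concentrates on compressing \emph{small} subsets, where the group $G$ acts on possible extensions. This is precisely the regime where the structural hypotheses of the abstract — transitivity of $G$ together with image size at least $n-1$ — give enough leverage to control extension lengths, and where the main theorem of the paper should live.

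The hard part, and indeed the reason \Cref{conj:cerny} has remained open for more than fifty years, is exactly this quadratic-to-linear gap in the extension step: there is no general combinatorial mechanism known that guarantees a linear-length extension for \emph{arbitrary} subsets of an \emph{arbitrary} synchronizing automaton. My plan would therefore only reach the conjectured bound under additional structural assumptions, and the unconditional statement would demand a genuinely new idea — for instance, an averaging or pigeonhole argument on the orbit structure of subsets under the semigroup generated by $\Sigma$ — to handle subsets that sit unfavourably with respect to any apparent group action.
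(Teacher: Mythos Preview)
The statement you are trying to prove is \emph{not} proved in the paper; it is stated there as \Cref{conj:cerny}, the \v{C}ern\'{y}--Starke Conjecture, and the paper explicitly records that it is ``widely open in general'' and that the best known unconditional upper bound is cubic, roughly $0.1654\,n^{3}+o(n^{3})$ due to Shitov. So there is no ``paper's own proof'' to compare against: the paper only verifies a quadratic bound (not even $(n-1)^2$, but $2n^{2}-7n+7$) under the two extra structural hypotheses in the abstract.

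Your proposal correctly identifies the standard extension/compression framework and the Frankl--Pin $\tbinom{n-|T|+1}{2}$ extension bound, but the heart of your plan --- the assertion that every $S$ with $|S|\ge 2$ admits a shrinking word of length at most $2(n-|S|)+1$ --- is precisely the step that nobody knows how to establish unconditionally. That bound on a single extension (or, dually, a single compression) is not a lemma you can derive from strong connectivity plus the bijective/non-bijective letter dichotomy; if it were, the conjecture would have fallen decades ago. You acknowledge this yourself in the final paragraph, which effectively concedes that the plan does not yield an unconditional proof and would only work under extra hypotheses such as those the paper imposes. In short, the proposal is an accurate outline of \emph{why} the conjecture is hard, not a proof of it, and the gap you name --- the quadratic-to-linear improvement in the extension step --- remains genuinely open.
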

This conjecture is usually called \v{C}ern\'{y} Conjecture \cite{Cer71}, although it was first published in 1966 by Starke \cite{Sta66}. Regarding the history of \cref{conj:cerny}, we recommend \cite[Section 3.1]{Vol22}.

\v{C}ern\'{y} \cite{Cer64} showed that there exists an $n$-state automaton with the reset threshold equal to $(n-1)^2$ for every $n$. 
That means the upper bound in \cref{conj:cerny} is optimal.

For a long time, the best upper bound of reset thresholds was $\frac{n^3-n}{6}$, obtained by Pin and Frankl \cite{Fra82,Pin83}. In 2018, Szyku{\l}a \cite{Szy18} improved the Pin-Frankl bound. Based on Szyku{\l}a's method, Shitov \cite{Shi19} made a further improvement and obtained the new upper bound $cn^3 + o(n^3)$, where the coefficient $c$ is close to $0.1654$.

Although \v{C}ern\'{y} Conjecture is widely open in general, it has been shown to be true in many special classes e.g. \cite{Dubuc98,Kari01,Tra07,Vol09}. 
For a summary of the state-of-the art around the \v{C}ern\'{y} Conjecture, we recommend the two surveys \cite{KV21,Vol22}.

\subsection{Automata containing transitive groups and our contribution}

Let $\MCA = (Q, \Sigma, \delta)$ be an automaton. 
The \emph{defect} of a word $w \in \Sigma^*$ is the integer $|Q| - |Q.w|$.
For a non-negative integer $i$, write $\Sigma_i$  for the set of letters of defect $i$.

Let $A \subseteq \Sigma_0$. 
Observe that $\delta$ induces a homomorphism from the free monoid $A^*$ to the \emph{symmetric group} $\Sym(Q)$.
We will say that $\MCA$ \emph{contains} the permutation group $\delta(A^*)$ with the generating set $\delta(A)$.
A subgroup $G$ of $\Sym(Q)$ is \emph{transitive} if $q.G = Q$ for each $q \in Q$.
We use $\ST$ to denote the family of synchronizing automata that contain a transitive permutation group on its state set.

In this article, we focus on automata in $\ST$. 
Many subfamilies of $\ST$ have been studied in detail \cite{ACS17,Rys95,Rys00,Rys22,RS24,Rus23,Ste10}. 
We introduce two important results that are strongly relevant to this article.

For any $A \subseteq \Sigma_0$, the minimum integer $d$ such that $\delta(A^{d}) = \delta(A^*)$ is denoted by $\dd_{\MCA}(A)$. Observe that $\dd_{\MCA}(A)$ is the diameter of the Cayley digraph of the group $\delta(A^*)$ with the generating set $\delta(A)$. The following theorem is essentially contained in the results of Rystsov \cite{Rys95}.
\begin{theorem}[Rystsov] \label{thm:Rystsov}
	Let $\MCA = (Q, \Sigma, \delta) \in \ST$ be an $n$-state automaton. 
	Then $\rt(\MCA) \le 1+ (n-2)(n - 1 + \dd_{\MCA}(A))$, where $A \subseteq \Sigma_0$ such that $\delta(A^*)$ is transitive.
\end{theorem}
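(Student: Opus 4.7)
My plan is the standard iterative compression strategy. Since $\MCA$ is synchronizing, some letter $a_0 \in \Sigma$ must have positive defect (otherwise every word would act as a bijection on $Q$). Applying $a_0$ yields $T_0 := Q.a_0$ with $|T_0| \le n-1$. Given the following compression lemma, I would iterate it starting from $T_0$: each application strictly decreases the image size, so after at most $n-2$ steps the image becomes a singleton, yielding a reset word of total length at most $1 + (n-2)(n-1+\dd_{\MCA}(A))$.

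\medskip

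\noindent\emph{Compression Lemma.} For every $T \subsetneq Q$ with $|T| \ge 2$, there exists $w \in \Sigma^*$ with $|w| \le n-1+\dd_{\MCA}(A)$ and $|T.w| < |T|$.

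\medskip

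For the compression lemma, my plan is to exploit the interplay between letters of positive defect and the transitive group $G := \delta(A^*)$. The direct approach: fix a letter $a \in \Sigma$ of positive defect together with a pair $\{b, b'\}$ satisfying $b.a = b'.a$, and search for $g \in G$, realized by an $A$-word of length at most $\dd_{\MCA}(A)$, with $\{b, b'\} \subseteq T.g$; then $w := g \cdot a$ compresses $T$ in at most $\dd_{\MCA}(A)+1$ letters. This succeeds whenever the $G$-orbit of $\{b,b'\}$ on the $2$-element subsets of $Q$ meets some pair inside $T$. For the general case, I would analyze a shortest compressing word $w^* = a_1 \cdots a_\ell \cdot a$ for $T$, and bound $\ell \le n-2+\dd_{\MCA}(A)$ as follows. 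Writing $T_i := T.a_1 \cdots a_i$, minimality of $w^*$ forces the $T_i$ to be pairwise distinct $|T|$-subsets (else excising a repeated sub-trajectory yields a strictly shorter compressing word). A no-shortcut principle then applies: if $T_i$ and $T_j$ with $i < j$ lie in the same $G$-orbit, then $T_j = T_i \cdot g$ for some $g \in G$ representable by an $A$-word of length at most $\dd_{\MCA}(A)$, so substituting that $A$-word for $a_{i+1} \cdots a_j$ yields a shorter compressing word unless $j - i \le \dd_{\MCA}(A)$. Combined with a tracking argument on the trajectory of a fixed state $r \in Q \setminus T$ (which takes at most $n$ distinct values in $Q$), this should give the desired bound on $\ell$.

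\medskip

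The main technical hurdle is the state-tracking step that produces the ``$n-1$'' contribution to the bound. Concretely, I need to couple repetitions in the state trajectory $r_i := r.a_1 \cdots a_i$ with transitions between $G$-orbits along $T_0, T_1, \ldots, T_\ell$, so that the subset-level no-shortcut bound and the state-level pigeonhole combine cleanly. I expect that making this rigorous will require a careful choice of tracked state (likely an element of $Q \setminus T$) and delicate case analysis linking orbit jumps in the $T_i$ trajectory to state jumps in $r_i$.
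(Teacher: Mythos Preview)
Your compression lemma is the right target and the no-shortcut principle over $G$-orbits is valid, but the state-tracking step does not close the argument, and I do not see how it can be made to. The difficulty is that the number of $G$-orbits on the $|T|$-element subsets of $Q$ is in general not bounded linearly in $n$; already for the cyclic group $G=\langle (1\,2\,\cdots\,n)\rangle$ there are roughly $\binom{n}{|T|}/n$ such orbits. Your no-shortcut principle only says that all indices $i$ with $T_i$ in a fixed $G$-orbit lie within a window of width $\dd_{\MCA}(A)$, which by itself gives no useful bound on $\ell$. Tracking a state $r\in Q\setminus T$ does not help either: nothing forces $r_i\notin T_i$ for $i\ge 1$, the values $r_i$ may repeat arbitrarily often along a minimal compressing word, and there is no mechanism tying repetitions of $r_i$ to the trajectory $(T_i)$ entering or leaving a $G$-orbit. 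The ``coupling'' you flag as the main hurdle is precisely the missing idea, not a routine detail.

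The paper does not prove \cref{thm:Rystsov} directly (it is quoted from \cite{Rys95}) but recovers it as the special case $\Dim(K_\infty)\ge 1$, $\transLen(\MCK(\MCA,A))\le \dd_{\MCA}(A)$ of \cref{thm:main}, whose proof uses the \emph{extension} method rather than compression. Extension is the direction that interacts cleanly with transitivity: for any proper nonempty $S\subsetneq Q$ and any letter $a$, transitivity of $G=\delta(A^*)$ gives $\sum_{g\in G}\lvert (S.g^{-1}).a^{-1}\rvert=|G|\,|S|$, so either some $S.g^{-1}$ is strictly extended by $a$ (yielding an extending word of length at most $\dd_{\MCA}(A)+1$), or $\lvert (S.g^{-1}).a^{-1}\rvert=|S|$ for every $g\in G$ and every $a\in\Sigma\setminus\Sigma_0$. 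In the paper's language the second alternative is exactly $\chr_S\in (K_\infty)^{\circ}$, a proper linear subspace of $\MBQ^n$; the dimension argument of \cref{lem:LA} then shows one can leave this subspace by a word of length at most $n-2$, after which the first alternative applies. No analogous averaging identity is available for forward images, which is why the compression route stalls; if you want to rescue your outline, dualise it to preimages.
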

Ara\'{u}jo, Cameron and Steinberg \cite[Theorem 9.2]{ACS17}, using representation theory over the field of rationals $\mathbb{Q}$, have improved  Rystsov’s bound as displayed in \cref{thm:Steinberg}. 
It is worth mentioning that a similar result can be also found in \cite[Theorem 3.4]{Ste10}.
\begin{theorem}[Ara\'{u}jo-Cameron-Steinberg] \label{thm:Steinberg}
	Let $\MCA = (Q, \Sigma, \delta) \in \ST$ be an $n$-state automaton. 
	Then $\rt(\MCA) \le 1+ (n-2)(n - m + \dd_{\MCA}(A))$, where $A \subseteq \Sigma_0$ such that $\delta(A^*)$ is transitive and $m$ is the maximum dimension of an irreducible $\mathbb{Q}(\delta(A^*))$-module of $\mathbb{Q}^{Q}$
\end{theorem}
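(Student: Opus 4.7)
The plan follows Rystsov's strategy underlying \cref{thm:Rystsov}: iteratively shrink the image $Q.w$ by appending short suffixes, starting from a single letter of positive defect. The sharpening from $n - 1$ to $n - m$ in each per-step bound comes from analyzing the permutation module $\MBQ^Q$ representation-theoretically over $\MBQ$, rather than by a purely combinatorial counting argument.

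\emph{Setup and extension lemma.} Regard $V = \MBQ^Q$ as a $\MBQ G$-module with $G = \delta(A^*)$ acting via $(L_g f)(q) = f(q.g)$. By Maschke's theorem $V$ is semisimple and decomposes as $V = \MBQ\chr_Q \oplus V_1$ with $V_1 = \{f \in V : \sum_q f(q) = 0\}$; by hypothesis every irreducible $\MBQ G$-submodule of $V$ has dimension at most $m$. The key step is a refined extension lemma: for every $R \subseteq Q$ with $|R| \ge 2$, there exist a group element $g \in G$ expressible as a word in $A^*$ of length at most $\dd_\MCA(A)$, and a word $z \in \Sigma^*$ of length at most $n - m$, such that $|R.(gz)| < |R|$. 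Equivalently, since $g$ is a permutation, $z$ must collapse some pair inside the $G$-translate $R.g$.

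\emph{Iteration.} Pick $b \in \Sigma$ of positive defect, which exists since $\MCA$ is synchronizing with $|Q| \ge 2$, and set $R_0 = Q.b$; so $|R_0| \le n - 1$. Apply the extension lemma iteratively to obtain $R_{i+1} = R_i.u_i$ with $|u_i| \le (n - m) + \dd_\MCA(A)$ and $|R_{i+1}| < |R_i|$; after at most $n - 2$ iterations $|R_{n-2}| = 1$, yielding a reset word $b u_0 u_1 \cdots u_{n-3}$ of total length at most $1 + (n-2)\bigl((n - m) + \dd_\MCA(A)\bigr)$.

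\emph{Sketch for the lemma and main obstacle.} The cyclic $\MBQ G$-submodule $W_R \subseteq V$ generated by $\chr_R$ contains $\chr_Q$ by transitivity, decomposing as $W_R = \MBQ\chr_Q \oplus W_R^{(1)}$ with $W_R^{(1)} \subseteq V_1$; semisimplicity provides a $\MBQ G$-complement $U$ of $W_R$. The core step is to show, via the semisimple decomposition of $V_1$ and the dimension bound $\le m$ on its irreducible constituents, that a merging word for some $G$-translate of $R$ can be constructed using at most $n - m$ non-group letters, plus a group translator $g \in G$ of length $\le \dd_\MCA(A)$ chosen to position $R$ against the kernel of that merging word. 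This quantitative representation-theoretic estimate — bounding the number of non-group letters needed to witness a collapse by $n - m$ rather than by $n - 1$ — is the main obstacle; every other part of the argument parallels Rystsov's proof of \cref{thm:Rystsov}, and the hypothesis on $m$ enters only through this bound.
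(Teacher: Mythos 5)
This theorem is quoted by the paper from Ara\'{u}jo--Cameron--Steinberg \cite[Theorem 9.2]{ACS17} without proof, so the only question is whether your outline would actually constitute a proof; it would not, because the entire content of the statement is concentrated in the step you yourself label ``the main obstacle'' and leave unproven. The improvement over \cref{thm:Rystsov} is exactly the claim that every relevant subset admits a collapsing (or, dually, extending) word of length at most $n - m + \dd_{\MCA}(A)$ rather than $n - 1 + \dd_{\MCA}(A)$; everything else in your proposal (Maschke, the decomposition $\MBQ^Q = \MBQ\chr_Q \oplus V_1$, the iteration giving $1 + (n-2)(\cdot)$) is the routine frame shared with Rystsov's argument. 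A proof sketch whose key lemma is declared an obstacle is a gap, not a proof. Moreover, your sketch never actually uses the defining property of $m$: you invoke it as an \emph{upper} bound (``every irreducible $\MBQ G$-submodule has dimension at most $m$''), but the bound $n - m + \dd_{\MCA}(A)$ improves as $m$ grows, so any correct argument must exploit the \emph{existence} of an irreducible constituent of (large) dimension $m$ --- e.g.\ by escaping only from the orthogonal complement of such a constituent, whose dimension is $n - m$. Nothing in your description of $W_R$, $W_R^{(1)}$ and its complement $U$ indicates how a large irreducible summand shortens the required word, and that is precisely where the representation theory of \cite{ACS17,Ste10} does its work.

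A secondary structural problem: you phrase the per-step lemma in terms of compressing images, $|R.(gz)| < |R|$, whereas all known arguments of this type (Rystsov's, the Ara\'{u}jo--Cameron--Steinberg proof, Steinberg's, and Section~2 of this paper) work with preimages, because the identity $|S.w^{-1}| - |S| = \langle \chr_S, \kk_w \rangle$ of \cref{eq:elem} makes the deficiency a \emph{linear} functional of $\chr_S$; this linearity is what allows averaging over $G$-translates and the chain-of-subspaces argument (as in \cref{lem:LA}) that converts a dimension count into a length bound. The quantity $|R.w|$ is not linear in $\chr_R$, so the representation-theoretic machinery you appeal to does not transfer to your compression formulation without additional work. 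If you want to complete this proof, recast it via \cref{prop:ExtensionMethod}: show that every nonempty proper $S \subsetneq Q$ is $(n - m + \dd_{\MCA}(A))$-extensible, by combining (i) an averaging argument over $G$ showing that once $\chr_S$ is not orthogonal to the module generated by the vectors $\kk_b$ ($b$ of positive defect), some translate by a word in $A^{\le \dd_{\MCA}(A)}$ gives a strictly positive inner product, with (ii) an escape argument bounded by the codimension $n - m$ of a maximal irreducible constituent. Step (ii), made precise, is the missing mathematics.
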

In the case that $\dd_{\MCA}(A)$ is small (a linear function of $n$), \cref{thm:Rystsov,thm:Steinberg} bound $\rt(\MCA)$ from above by a quadratic function of $n$, or even verify \v{C}ern\'{y} Conjecture \cite{Rys95,ACS17,Ste10}.

In this article, we obtain \cref{thm:main} which improves \cref{thm:Rystsov} in a different way.
As an application of \cref{thm:main}, we obtain the following result.
\begin{theorem} \label{thm:transitive}
	Let $\MCA \in \ST$ be an $n$-state automaton. If $\Sigma = \Sigma_0 \cup \Sigma_1$, then $\rt(\MCA) \le 2n^2 - 7n + 7$.
\end{theorem}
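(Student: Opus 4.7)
The plan is to derive \cref{thm:transitive} as an application of \cref{thm:main}, the paper's main technical improvement of \cref{thm:Rystsov}. Since the target bound $2n^2 - 7n + 7 = 1 + (n-2)(2n-3)$ has the same shape as Rystsov's $1 + (n-2)(n - 1 + \dd_{\MCA}(A))$ with the Cayley diameter $\dd_{\MCA}(A)$ replaced by $n-2$, I expect \cref{thm:main} to have the form $\rt(\MCA) \le 1 + (n-2)(n-1+c)$ for some parameter $c$ refining $\dd_{\MCA}(A)$, and my task will be to show that $c \le n-2$ under the hypothesis $\Sigma = \Sigma_0 \cup \Sigma_1$.

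I would start by taking $A = \Sigma_0$ so that $G := \delta(A^*)$ is a transitive subgroup of $\Sym(Q)$, noting that $\Sigma_1 \ne \emptyset$ (otherwise every letter is a permutation and $\MCA$ fails to be synchronizing for $n \ge 2$). Since every non-bijective letter has defect exactly $1$, each $b \in \Sigma_1$ determines a unique unordered pair $\{p_b, q_b\}$ of states that $b$ collapses, and $b$ is injective outside one of them. The refined quantity $c$ in \cref{thm:main} should be naturally controlled by a Schreier-type diameter of the action of $G$ on $Q$ (or on the $G$-orbits of $2$-subsets of $Q$), which is bounded by $n - 1$ in a transitive $n$-point action, rather than by the potentially exponential Cayley diameter of $G$ itself.

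The core of the argument would then be to show, for every $\MCA \in \ST$ satisfying the defect condition, that $c \le n-2$. Concretely, for an arbitrary subset $P \subseteq Q$ with $|P| \ge 2$, I would construct a short word $w$ with $|P.w| < |P|$ as follows: apply at most $n-1$ letters from $\Sigma_0$ to move some designated element of $P$ onto $p_b$ for a chosen $b \in \Sigma_1$; then apply at most $n-2$ further letters from $\Sigma_0$ (via a stabilizer-orbit argument) to bring a second element of $P$ onto $q_b$; finally apply $b$ to merge. An amortization argument---absorbing the initial one-letter drop from $|Q|$ to $|Q|-1$ into the $+1$ of the Rystsov-style formula---should yield the overall bound $2n^2 - 7n + 7$.

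The main obstacle I foresee is the second alignment step: since $G$ is only assumed transitive, not $2$-transitive, the stabilizer $G_{p_b}$ may fail to act transitively on $Q \setminus \{p_b\}$, and the naive move of a second state onto $q_b$ can break. Overcoming this likely requires either combining several letters of $\Sigma_1$ with differently positioned merging pairs---so that the union of their $G$-orbits on $2$-subsets covers every $2$-subset of $Q$---or exploiting the synchronizing hypothesis more globally to guarantee that, for every $2$-subset of $P$, some word in $\Sigma^*$ maps it into the merging pair of some $b$ within bounded length. Matching this group-orbit combinatorics to the defect-$1$ structure is where I expect the bulk of the technical work to lie.
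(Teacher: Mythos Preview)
Your guess at the shape of \cref{thm:main} is off: the actual bound is $\rt(\MCA) \le 1 + (n-2)\bigl(n - \Dim K_{\infty} + \transLen(\MCK)\bigr)$, where $K_{\infty}$ is a certain $G$-invariant cone in $\MBQ^{n}$ (a linear subspace when $G$ is transitive) and $\transLen(\MCK)$ counts how many rounds of right-multiplication by letters of $A$ are needed for this cone to stabilize. The improvement over Rystsov comes from \emph{two} separate sources---replacing $\dd_{\MCA}(A)$ by the smaller $\transLen(\MCK)$, and replacing the $-1$ by $-\Dim K_{\infty}$---and the derivation of \cref{thm:transitive} uses both. Under $\Sigma = \Sigma_0 \cup \Sigma_1$ the paper proves (generically) $\transLen(\MCK) \le 3\Dim K_{\infty} - n - 1$, so that $n - \Dim K_{\infty} + \transLen(\MCK) \le 2\Dim K_{\infty} - 1 \le 2n - 3$; there is no single parameter $c$ to be bounded by $n-2$.

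More importantly, your concrete plan---forward compression by aligning two states of $P$ onto a merging pair via a stabilizer-orbit argument---is not the paper's route, and the obstacle you flag is fatal for it: in a merely transitive group a point-stabilizer can have up to $n-1$ orbits on the remaining points, so ``second alignment in at most $n-2$ steps'' simply fails, and pooling merging pairs across $\Sigma_1$ gives no length control. The paper works \emph{backwards} via the extension method and linearizes: for $b \in \Sigma_1$ one has $\kk_b = \chr_{\dupl(b)} - \chr_{\excl(b)}$, and one tracks the Rystsov digraph on $Q$ whose arc set at stage $i$ is $\{(\excl(bw),\dupl(bw)) : b \in \Sigma_1,\ w \in A^{\le i}\}$. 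The technical core (\cref{lem:weaklyconn}, \cref{lem:decrease}, \cref{lem:equal}) bounds the stage at which the \emph{strongly connected components} of this growing digraph stabilize, via a dimension and component count rather than any Schreier diameter; that stage controls $\transLen(\MCK)$, while $\Dim K_{\infty}$ is read off as $n$ minus the number of components. Your intuition about ``$G$-orbits of merging pairs'' is pointing in the right direction, but it is the linear-algebraic translation and the digraph-growth analysis---not orbit alignment---that turn it into a linear-in-$n$ bound.
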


\subsection{Approach and layout}

To prove \cref{thm:transitive,thm:main}, we use the so-called \emph{extension method} which is based on \cref{prop:ExtensionMethod}. The proof of \cref{prop:ExtensionMethod} can be found in many papers (e.g. \cite[Section 3.4]{Vol22}). 

Let $\MCA = (Q, \Sigma, \delta)$ is an automaton. 
For a subset $S \subseteq Q$ and a word $w \in \Sigma^*$, write $S.w^{-1}$ for the set $\{q \in Q: q.w \in S\}$.
A subset $S \subseteq Q$ is \emph{extended} by a word $w \in \Sigma^*$ if $|S.w^{-1}| > |S|$. A subset $S \subseteq Q$ is called \emph{$m$-extensible}, or \emph{extensible}, if $S$ is extended by a word of length at most $m$.

\begin{proposition} \label{prop:ExtensionMethod}
	Let $\MCA = (Q, \Sigma, \delta)$ be a synchronizing automaton.
	If every nonempty proper subset $S$ of $Q$ is $m$-extensible, then $\rt(\MCA) \le 1 + (n-2)m$. 
\end{proposition}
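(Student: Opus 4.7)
I plan to construct a reset word in \emph{reverse} by the standard extension argument. Starting from a carefully chosen singleton $S_0=\{q_0\}\subseteq Q$, I iteratively build subsets $S_0 \subsetneq S_1 \subsetneq \cdots \subsetneq S_k = Q$ where $S_{i+1}=S_i.w_{i+1}^{-1}$ for some short word $w_{i+1}$. By construction $S_{i+1}.w_{i+1}\subseteq S_i$, so the concatenation $u = w_k w_{k-1}\cdots w_1$ satisfies $Q.u = S_k.u \subseteq S_0=\{q_0\}$ and is therefore a reset word. The goal is to bound $|u|=\sum_i |w_i|$ by $1+(n-2)m$.

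\textbf{A single-letter first step.} The gain from the naive $(n-1)m$ down to $1+(n-2)m$ comes from making the very first backward step cost only one letter, using synchronization rather than the $m$-extensibility hypothesis. Because $\MCA$ is synchronizing and $n \ge 2$, not every letter can act as a permutation of $Q$ (otherwise every word would be a permutation, and no word could collapse $Q$ to a single state). So some letter $a\in \Sigma$ satisfies $|Q.a|\le n-1$, and a pigeonhole argument on the fibres of $a$ over $Q.a$ then yields a state $q_0\in Q.a$ with $|\{q_0\}.a^{-1}|\ge 2$. Take $S_0=\{q_0\}$, $w_1=a$, and $S_1 = S_0.a^{-1}$, so $|S_1|\ge 2$ after reading only one letter.

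\textbf{Iterating and concluding.} For $i\ge 1$, as long as $S_i$ is a nonempty proper subset of $Q$ (nonempty because $|S_i|\ge 2$, proper by assumption), the $m$-extensibility hypothesis supplies a word $w_{i+1}\in \Sigma^{\le m}$ with $|S_i.w_{i+1}^{-1}|>|S_i|$; set $S_{i+1} = S_i.w_{i+1}^{-1}$. Since the sizes strictly increase by at least $1$ and $|S_1|\ge 2$, by induction $|S_i|\ge i+1$, so the process must terminate at some $k\le n-1$ with $S_k = Q$, after at most $k-1 \le n-2$ extensions of length $\le m$ beyond the initial letter $a$. Summing gives $|u|\le 1+(n-2)m$, which is the desired bound. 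The only non-mechanical point is the pigeonhole observation that produces the compressed target $q_0$ and underpins the single-letter first step; once that is in hand, the rest is a bookkeeping induction on the sizes $|S_i|$.
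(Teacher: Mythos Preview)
Your proof is correct and is precisely the standard extension argument that the paper cites (from \cite[Section~3.4]{Vol22}) rather than reproducing in full. The single-letter first step via a non-permutation letter is exactly how the bound $1+(n-2)m$, rather than $(n-1)m$, is achieved in the literature.
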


The remaining of this article will proceed as follows.
In \cref{sec: Linear}, combining the extension method and a dimensional argument for a linear structure, we establish a upper bound for the reset thresholds of automata in $\ST$, see \cref{thm:main}. In \cref{sec: RystsovDigraphs}, using some graph theoretical techniques, we present a proof of \cref{thm:transitive}. Using these graph theoretical techniques,  we can slightly improve some known results about reset thresholds.
At the end, we summarize our results in \cref{sec:conclusion}. 

\section{Linear structure} \label{sec: Linear}

In this section, we will encode some information of an $n$-state synchronizing automaton into some objects in $\MBQ^n$. 
Using the linear structure of $\MBQ^n$, we will obtain a upper bound for its reset threshold.

Let $\MCA = (Q, \Sigma, \delta)$ be an $n$-state automaton. We always assume that $Q = \{1,\ldots, n\}$.
Fix a subset $A$ of $\Sigma_0$ and denote $\delta(A^*)$ by $G$.

Firstly, let us recall some concepts in linear space.
Let $X \subseteq \MBQ^n$. 
The linear subspace \emph{spaned} by $X$, denoted by $\Span(X)$, is defined as
\[
	\Span(X) := \left\{\sum_{x \in X} c_x x: c_x \in \MBQ \right\}.
\]
The \emph{cone} generated by $X$, denoted $\cone(X)$, is the set
\[
	\cone(X) := \left\{\sum_{x \in X} c_x x: c_x \in \MBQ_{\ge 0} \right\}
\] 
where $\MBQ_{\ge 0}$ is the set of non-negative rationals.
Write $\langle \cdot, \cdot \rangle$ for the \emph{standard inner product} of $\MBQ^n$, that is the map such that $\langle x, y \rangle = \sum_{i = 1}^n x(i)y(i)$ for every $x,y \in \MBQ^n$.
The \emph{polar cone} of $X$, denoted $X^{\circ}$, is the set
\[
	X^{\circ} := \{y \in \MBQ^n : \langle x, y \rangle \le 0, \forall x \in X\}.
\]
If $X$ is a linear subspace of $\MBQ^n$, the \emph{orthogonal complement} of $X$, denoted $X^{\perp}$, is defined as
\[
	X^{\perp} = \{y : \langle x, y \rangle = 0, \forall x \in X\}.
\] 
It clearly holds $X^{\circ} = X^{\perp}$ in the case that $X$ is a linear subspace.

Next we will encode some information of an $n$-state synchronizing automaton into some objects in $\MBQ^n$.
For a subset $S \subseteq Q$, define $\chr_S$ to be the $(1 \times n)$-vector over $\MBQ$ such that its $i$-th coordinate is 
\[
	\chr_S(i) = \begin{cases}
	1 & \text{if $i \in S$,}\\
	0 & \text{otherwise.}
	\end{cases}
\]
For any $q \in Q$, to simplify notation, we write $\chr_q$ for $\chr_{\{q\}}$.
Let $w$ be an arbitrary word in $\Sigma^*$. Define $[w]$ to be the $(n \times n)$-matrix over $\MBQ$ such that $\chr_q [w] = \chr_{q.w^{-1}}$ for all $q \in Q$. It is clear that $\chr_S[w] = \chr_{S.w^{-1}}$ for all $S \subseteq Q$.
Define $\kk_w$ for the $(1 \times n)$-vector over $\MBQ$ such that its $i$-th coordinate is
\[
	\kk_w(i) = |i.w^{-1}| - 1.
\] 
For every $g \in G$, set $\kk_g$ to be $\kk_w$ and $[g]$ to be $[w]$, where $w \in \Sigma^*$ is an arbitrary word such that $\delta(w) = g$.

A sequence $(X_i)_{i  \ge 0}$ is called \emph{eventually constant} if there exists an integer $j \ge 0$ such that for all $k > j$, $X_k = X_j$.
For an eventually constant sequence $\mathcal{X} = (X_i)_{i \ge 0}$, the minimum integer $j$ such that for all $k > j$, $X_k = X_j$ is called \emph{transient length} of $\mathcal{X}$, denoted by $\transLen(\mathcal{X})$ and write $\limit \mathcal{X}$ for $X_{\transLen(\mathcal{X})}$. 
In the following, we will define some eventually constant sequences which play a crucial role in our proof. 

Define $\mathcal{T}(\MCA, A) = (T_i)_{i \ge 0}$ and $\mathcal{K}(\MCA, A) = (K_i)_{i \ge 0}$, 
% $\mathcal{L}(\MCA) = (L_i)_{i \ge 0}$ 
to be the two sequences such that 
% \begin{equation} \label{eq:definition}
\[
	T_i := \left\{\kk_w : w \in (\Sigma \setminus \Sigma_0) A^{\le i} \right\} \quad \text{and} \quad K_i := \cone(T_i),
\]
% \end{equation}
for every $i \ge 0$.
To simplify notations, without ambiguity, we write $\mathcal{T}$ and $\mathcal{K}$ for $\mathcal{T}(\MCA, A)$ and $\mathcal{K}(\MCA, A)$, respectively.

Since $\langle A \rangle$ is a finite group, $\mathcal{T}, \mathcal{K}$ are eventually constant. 
Denote $\limit \mathcal{T}$ and $\limit \mathcal{K}$ by $T_{\infty}$ and $K_{\infty}$, respectively.
Observe that $K_{\infty} = \cone(T_{\infty})$ and then 
\begin{equation} \label{eq:index}
	\transLen(\mathcal{T}) \ge \transLen (\mathcal{K}).
\end{equation}

We begin with some elementary results. According to the definition, it clearly holds that 
\begin{equation} \label{eq:elem}
	|S.w^{-1}| - |S| = \langle \chr_S, \kk_w \rangle
\end{equation}
for every $S \subseteq Q$ and $w \in \Sigma^*$. As consequences, we have the following lemma.

\begin{lemma} \label{lem:elem}
	Let $S \subseteq Q$.
	\begin{enumerate}
		\item \label{lem:extendWord} The subset $S$ is extended by $w$ if and only if $\langle \chr_S, \kk_w \rangle > 0$.
		\item \label{lem:larger} If $\chr_S \in (K_{\infty})^{\circ}$ then $|S.a^{-1}| = |S|$ for all $a \in \Sigma$.
		\item \label{lem:sumzero} For every vector $x \in K_{\infty}$, $\sum_{i = 1}^n x(i) = \langle \chr_Q, x \rangle = 0$.
	\end{enumerate}	
\end{lemma}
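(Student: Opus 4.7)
The plan is to dispatch the three items directly from the basic identity~\eqref{eq:elem}, proving them in the order (1), (3), (2) so that the argument for (2) may invoke (3). Statement (1) is essentially a restatement of~\eqref{eq:elem}: the condition $|S.w^{-1}|>|S|$ that defines extension translates immediately to $\langle \chr_S, \kk_w\rangle > 0$.

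For (3), since $K_\infty = \cone(T_\infty)$ and the map $x \mapsto \langle \chr_Q, x\rangle$ is linear, it is enough to verify the claim on the generators $\kk_w$ with $w \in (\Sigma \setminus \Sigma_0) A^*$. A direct expansion yields
\[
\langle \chr_Q, \kk_w\rangle = \sum_{i=1}^n (|i.w^{-1}| - 1) = |Q| - n = 0,
\]
using only that the preimages $\{i.w^{-1} : i \in Q\}$ form a partition of $Q$.

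For (2), I split on whether $a$ is bijective. If $a \in \Sigma_0$, the action of $a$ is a bijection of $Q$, so $|S.a^{-1}| = |S|$ automatically. If instead $a \in \Sigma \setminus \Sigma_0$, then $\kk_a \in T_0 \subseteq T_\infty$, hence $\kk_a \in K_\infty$; the hypothesis $\chr_S \in K_\infty^\circ$ combined with~\eqref{eq:elem} yields only the one-sided bound $|S.a^{-1}| \le |S|$. The one nontrivial move is upgrading this to equality, which I would accomplish by applying the same argument to the complement: using (3),
\[
\langle \chr_{Q \setminus S}, \kk_a\rangle \;=\; \langle \chr_Q, \kk_a\rangle - \langle \chr_S, \kk_a\rangle \;=\; -\langle \chr_S, \kk_a\rangle \;\ge\; 0,
\]
so $|(Q \setminus S).a^{-1}| \ge |Q \setminus S|$. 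Since $a$ is a total function on $Q$, the preimages of $S$ and $Q\setminus S$ partition $Q$, giving $|S.a^{-1}| + |(Q\setminus S).a^{-1}| = n$, and the two opposite inequalities are forced to be equalities. This pairing with the complement via (3) is the only step that requires any thought; the rest is unwinding definitions.
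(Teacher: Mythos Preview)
Your proof is correct. The paper does not actually give a proof of this lemma; it simply records it as an immediate consequence of the identity~\eqref{eq:elem}, so your argument is a faithful (and slightly more detailed than necessary) unpacking of what the paper leaves implicit. In particular, your complement trick for item~(2) is the natural way to upgrade the one-sided inequality to equality, and it is exactly the sort of verification the reader is meant to supply.
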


We say that $\MCA$ is \emph{strongly connected} if for every two states $p,q$, there exists a word $w \in \Sigma^*$ such that $p.w = q$.

\begin{lemma} \label{lem:leavePolarCone}
	Assume that $\MCA = (Q,\Sigma,\delta)$ is synchronizing and strongly connected. 
	Let $S$ be a nonempty proper subset of $Q$.
	If $\chr_S \in (K_{\infty})^\circ$, then there exists a word $w \in \Sigma^*$ such that $\chr_{S.w^{-1}} \notin K^{\circ}$.
\end{lemma}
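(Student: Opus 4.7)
The plan is to argue by contradiction. Assuming that for every $w \in \Sigma^*$ one had $\chr_{S.w^{-1}} \in (K_\infty)^{\circ}$, I will show that this forces $|S.w^{-1}| = |S|$ for every $w \in \Sigma^{*}$, which a reset word then immediately contradicts.

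The main step is an induction on $|w|$ establishing $|S.w^{-1}| = |S|$ for every $w \in \Sigma^{*}$. The base case $w = \epsilon$ is trivial. For the inductive step, I would write a word of length $k+1$ as $aw$ with $a \in \Sigma$ and $|w| = k$. By the contradictory hypothesis, $\chr_{S.w^{-1}} \in (K_\infty)^{\circ}$, so \cref{lem:elem} applied to the set $S.w^{-1}$ yields $|(S.w^{-1}).a^{-1}| = |S.w^{-1}|$. Using the identity $(S.w^{-1}).a^{-1} = S.(aw)^{-1}$ together with the inductive hypothesis $|S.w^{-1}| = |S|$, this gives $|S.(aw)^{-1}| = |S|$, completing the induction.

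To close the argument, I would pick a reset word $r \in \Sigma^{*}$ (which exists by synchronizability) and write $Q.r = \{p\}$. Since every state is mapped to $p$ by $r$, we have $S.r^{-1} = Q$ when $p \in S$ and $S.r^{-1} = \emptyset$ otherwise, so $|S.r^{-1}| \in \{0, n\}$. On the other hand $S$ is nonempty and proper, so $0 < |S| < n$, contradicting the cardinality equality obtained in the previous step.

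The only real subtlety is to recognize that a single invocation of \cref{lem:elem} does not suffice: both $\chr_Q$ and $\chr_{\emptyset}$ lie in $(K_\infty)^{\circ}$, the former by the third part of \cref{lem:elem} and the latter trivially, so neither of the two possibilities for $S.r^{-1}$ falls out of the polar cone by itself. What propagates the contradiction is the cardinality invariant $|S.w^{-1}| = |S|$, which must hold along every word once it is assumed to hold on $S$. The strong-connectedness hypothesis does not appear to be needed for the argument above.
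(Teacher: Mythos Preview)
Your proof is correct and follows essentially the same approach as the paper's: both invoke \cref{lem:elem}~\cref{lem:larger} to propagate the cardinality invariant $|S.w^{-1}| = |S|$ along prefixes of a word and then derive a contradiction from a reset word, the paper phrasing this directly (choosing $u$ with $S.u^{-1} = Q$) while you phrase it as an induction inside a contradiction argument. Your remark that strong connectedness is unnecessary is a valid sharpening given the equality in \cref{lem:elem}~\cref{lem:larger}; the paper uses strong connectedness only to guarantee $S.u^{-1} = Q$ rather than $\emptyset$, a case you handle separately.
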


\begin{proof}
	Since $\MCA$ is synchronizing and strongly connected, there exists a word 
	\[
		u = u_1 u_2 \cdots u_t \in \Sigma^*
	\] 
	such that $S.u^{-1} = Q$.
	Since $|S| < |Q|$ and \cref{lem:elem}~\cref{lem:larger}, there exists an integer $t' < t$ such that $\chr_{S.v^{-1}} \notin (K_{\infty})^{\circ}$, where $v = u_1 u_2 \cdots u_{t'}$.
\end{proof}

Due to \cref{lem:leavePolarCone},  we can define $\ell(S)$ to be the length of a shortest word $w$ such that $\chr_{S.w^{-1}} \notin (K_{\infty})^{\circ}$ for every nonempty proper subset $S \subsetneq Q$.

\begin{proposition} \label{prop:ext}
	Assume that $\MCA$ is synchronizing and strongly connected.
	Let $S$ be a nonempty proper subset of $Q$.
	Then $S$ is $(\transLen(\MCK) + \ell(S) + 1)$-extensible.
\end{proposition}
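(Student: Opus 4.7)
The plan is to take a shortest word $w$ witnessing $\ell(S)$, use the minimality to argue that applying $w$ does not change the size of $S$ (so $|S.w^{-1}| = |S|$), and then extract from the cone $K_{\infty}$, whose polar the vector $\chr_{S.w^{-1}}$ has just left, an extending word of length at most $1 + \transLen(\MCK)$ for the preimage $S.w^{-1}$. Composing these two words produces the desired extender of $S$ of length at most $\ell(S) + 1 + \transLen(\MCK)$.

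For the size-preservation step, the case $\ell(S) = 0$ is trivial, so assume $\ell(S) > 0$; this forces $\chr_S \in (K_{\infty})^{\circ}$. Writing $w = w_1 w_2 \cdots w_t$ with $t = \ell(S)$, I would set $S_i := S.(w_i w_{i+1} \cdots w_t)^{-1}$ for $1 \le i \le t+1$, with the convention that $w_{t+1} \cdots w_t$ is the empty word; then $S_{t+1} = S$, $S_1 = S.w^{-1}$, and $S_i = S_{i+1}.w_i^{-1}$. For $2 \le i \le t+1$ the word $w_i \cdots w_t$ has length strictly less than $t = \ell(S)$, so by minimality of $\ell(S)$ we get $\chr_{S_i} \in (K_{\infty})^{\circ}$. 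Applying \cref{lem:elem}~\cref{lem:larger} to $S_{i+1}$ and the letter $w_i$ for $i = 1, \ldots, t$ yields $|S_i| = |S_{i+1}.w_i^{-1}| = |S_{i+1}|$, and chaining these equalities gives $|S.w^{-1}| = |S_1| = |S_{t+1}| = |S|$.

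For the extension step, $\chr_{S.w^{-1}} \notin (K_{\infty})^{\circ}$ produces some $x \in K_{\infty}$ with $\langle \chr_{S.w^{-1}}, x \rangle > 0$. Because $K_{\infty} = \cone(T_{\transLen(\MCK)})$, I expand $x$ as a nonnegative combination of the generators $\kk_v$ indexed by $v \in (\Sigma \setminus \Sigma_0) A^{\le \transLen(\MCK)}$, and positivity forces at least one such $v$ to satisfy $\langle \chr_{S.w^{-1}}, \kk_v \rangle > 0$. By \cref{lem:elem}~\cref{lem:extendWord} this $v$ extends $S.w^{-1}$, and the preimage identity $S.(vw)^{-1} = (S.w^{-1}).v^{-1}$ propagates that to $S$ itself via $vw$, whose length is at most $\ell(S) + 1 + \transLen(\MCK)$, as required. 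The only step needing genuine care is the size-preservation argument in the previous paragraph; everything else is routine unpacking of definitions together with \cref{lem:elem}.
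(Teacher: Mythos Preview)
Your proof is correct and follows the same approach as the paper's: take a shortest $w$ with $\chr_{S.w^{-1}}\notin(K_\infty)^\circ$, use minimality plus \cref{lem:elem}~\cref{lem:larger} to show $|S.w^{-1}|=|S|$, then pull an extender $u\in(\Sigma\setminus\Sigma_0)A^{\le\transLen(\MCK)}$ out of the cone and prepend it. Your version is in fact more carefully written---you spell out the full chain $|S_1|=\cdots=|S_{t+1}|$ and separate the case $\ell(S)=0$, whereas the paper compresses the chain into a single sentence---but the underlying argument is identical.
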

\begin{proof}
	Let $k = \transLen(\MCK)$ and $\ell = \ell(S)$.
	By \cref{lem:leavePolarCone}, there exists an $\ell$-length word $w = (w_1, \ldots, w_{\ell}) \in \Sigma^*$ such that $\chr_{S} \notin (K_{\infty})^{\circ}$. Since 
	\[
		S.(w_2, \ldots, w_{\ell})^{-1} \in (K_{\infty})^{\circ},
	\]
	by \cref{lem:elem}~\cref{lem:larger}, $|S.w^{-1}| = |S|$.

	Let $P = S.w^{-1}$.
	Since $\chr_P \notin (K_{\infty})^{\circ}$, there exists a vector $x \in K_{\infty}$ such that $\langle x, \chr_P \rangle > 0$. Since $K_{\infty} = \cone(T_k)$, there exists a vector $y \in T_k$ such that $\langle y, \chr_P \rangle > 0$. By the definition of $T_k$, we can find a word $u \in (\Sigma \setminus \Sigma_0) A^{\le k}$ such that $y = \kk_u$. By \cref{lem:elem}~\cref{lem:extendWord}, $|P.u^{-1}| > |P| = |S|$.
	Hence, $uw$ extends $S$ and then $S$ is $(k + \ell + 1)$-extensible.
\end{proof}

If $\transLen(\MCK)$ and $\ell(S)$ can be bounded by a linear function of $n$, using \cref{prop:ExtensionMethod}, one can bound $\rt(\MCA)$ by a quadratic function of $n$. In general, it is hard to estimate $\transLen(\MCK)$ and $\ell(S)$ of an automaton. 
However, in the next section, we will establish some linear bounds for $\transLen(\MCK)$ with the assumption $\MCA \in \ST$ and $\Sigma = \Sigma_0 \cup \Sigma_1$.
And, in the rest of this section, we will establish the following linear bound for $\ell(S)$ in the case that $K_{\infty}$ is a linear subspace of $\MBQ^n$.

\begin{proposition} \label{prop:ell}
	Assume that $\MCA$ is synchronizing and strongly connected.
	If $K_{\infty}$ is a linear subspace of $\MBQ^n$, then $\ell(S) \le n - 1 - \Dim(K_{\infty})$ for every nonempty proper subset $S \subsetneq Q$.
\end{proposition}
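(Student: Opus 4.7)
The plan is to reduce the proposition to a dimension count inside the orthogonal complement $K_{\infty}^{\perp}$. Because $K_{\infty}$ is assumed to be a linear subspace, its polar cone coincides with its orthogonal complement, $K_{\infty}^{\circ} = K_{\infty}^{\perp}$, and so $\Dim(K_{\infty}^{\perp}) = n - \Dim(K_{\infty})$. Write $k = \ell(S)$; the bound is trivial when $k = 0$, so assume $k \ge 1$. For every $i \ge 0$, introduce
\begin{equation*}
W_i := \Span\{\chr_S[w] : w \in \Sigma^{\le i}\}, \qquad V_i := W_i + \MBQ \chr_Q.
\end{equation*}
The goal is to establish the chain $\Dim V_0 < \Dim V_1 < \cdots < \Dim V_{k-1}$ together with the inclusion $V_{k-1} \subseteq K_{\infty}^{\perp}$; combined, they yield $k + 1 \le \Dim V_{k-1} \le n - \Dim(K_{\infty})$, which rearranges to the asserted bound.

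Two of the three required facts are nearly immediate. The equality $\Dim V_0 = 2$ holds because $S$ is a nonempty proper subset of $Q$, so $\chr_S$ and $\chr_Q$ are linearly independent. The inclusion $V_{k-1} \subseteq K_{\infty}^{\perp}$ follows because every word $w$ of length at most $k-1$ places $\chr_S[w]$ in $K_{\infty}^{\circ} = K_{\infty}^{\perp}$ by definition of $\ell(S)$, while \cref{lem:elem}~\cref{lem:sumzero} gives $\chr_Q \in K_{\infty}^{\perp}$.

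The key step is to rule out any collapse $V_i = V_{i+1}$ for $0 \le i \le k - 1$. Two elementary identities will do the work: $\chr_Q[a] = \chr_{Q.a^{-1}} = \chr_Q$ and $\chr_S[w][a] = \chr_S[aw]$. Together they show that $V_i[a] \subseteq V_{i+1}$ for every letter $a$. Hence if $V_i = V_{i+1}$ were to hold, the space $V_i$ would be invariant under right multiplication by every $[a]$, and a short induction on $j$ would force $V_j = V_i$ for all $j \ge i$. In particular $W_k \subseteq V_k = V_i \subseteq K_{\infty}^{\perp}$, contradicting the existence---guaranteed by $\ell(S) = k$---of a word $w$ of length $k$ with $\chr_S[w] \notin K_{\infty}^{\perp}$.

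The subtlety I anticipate is choosing the right enlargement of $W_i$: working with the sequence $W_i$ alone yields only the weaker bound $\ell(S) \le n - \Dim(K_{\infty})$, because $\Dim W_0 = 1$ rather than $2$. Adjoining the single extra vector $\chr_Q$, which both sits in $K_{\infty}^{\perp}$ and is fixed by every $[a]$, is what boosts the starting dimension to $2$ while preserving the invariance argument, thereby squeezing out the sharper $-1$ in the target bound.
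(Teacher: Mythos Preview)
Your argument is correct and is essentially the paper's own proof: both run a growing-chain-of-subspaces argument inside $K_\infty^{\perp}$ and exploit the invariant vector $\chr_Q$ to gain the extra $-1$ in the bound. The only cosmetic difference is that the paper projects $\chr_S$ onto the hyperplane orthogonal to $\chr_Q$ (tracking $\mathbf{p}_S := \chr_S - \tfrac{|S|}{n}\chr_Q$ inside an $(n-1-\Dim K_\infty)$-dimensional space via \cref{lem:LA}), whereas you adjoin $\chr_Q$ to the chain $W_i$ and stay in the full $K_\infty^{\perp}$; these are dual packagings of the same idea.
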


Before proving \cref{prop:ell}, we show that ``$G$ is transitive'' implies ``$K_{\infty}$ is a linear subspace of $\MBQ^n$''.

\begin{lemma} \label{lem:transitivity}
	If $G$ is transitive, then $K_{\infty}$ is the linear subspace spanned by $T_{\infty}$.
\end{lemma}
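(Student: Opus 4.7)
The plan is: since $K_\infty=\cone(T_\infty)\subseteq\Span(T_\infty)$ is automatic, it suffices to prove the reverse inclusion, and for this it is enough to show that $-y\in K_\infty$ for every $y\in T_\infty$. Indeed, once negatives of all generators lie in $K_\infty$, any rational combination of elements of $T_\infty$ can be rewritten as a non-negative combination by replacing each negatively-weighted term with a non-negative representation of its negative.

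First I would translate the $A^*$-tail of a word in $(\Sigma\setminus\Sigma_0)A^*$ into a permutation action on $\MBQ^n$. For $v\in\Sigma\setminus\Sigma_0$ and $w\in A^*$ with $g:=\delta(w)\in G$, the bijectivity of $g$ on $Q$ gives $i.w^{-1}=\{i.g^{-1}\}$, hence
\[
\kk_{vw}(i)=|(i.w^{-1}).v^{-1}|-1=\kk_v(i.g^{-1}).
\]
Writing $(x\cdot g)(i):=x(i.g^{-1})$ for the induced $G$-action on $\MBQ^n$, this reads $\kk_{vw}=\kk_v\cdot g$, and therefore
\[
T_\infty=\{\kk_v\cdot g : v\in\Sigma\setminus\Sigma_0,\ g\in G\}.
\]
In particular, both $T_\infty$ and $K_\infty=\cone(T_\infty)$ are stable under the $G$-action.

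The transitivity hypothesis then enters in one clean step. For a fixed $v\in\Sigma\setminus\Sigma_0$, the sum $z_v:=\sum_{g\in G}\kk_v\cdot g$ is $G$-invariant by construction, and on a transitive $G$-set any $G$-invariant vector in $\MBQ^n$ must be constant. Since the $G$-action preserves coordinate sums and $\sum_i\kk_v(i)=|Q|-|Q|=0$, we obtain $\sum_i z_v(i)=0$; the only constant vector of zero coordinate sum is $0$, so $z_v=0$. Rearranging,
\[
-\kk_v=\sum_{g\in G,\ g\ne \id}\kk_v\cdot g\in\cone(T_\infty)=K_\infty,
\]
and applying the $G$-action gives $-y\in K_\infty$ for every $y\in T_\infty$, which closes the argument.

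The only nontrivial computation is the identity $\kk_{vw}=\kk_v\cdot\delta(w)$ for $w\in A^*$, and this is pure bookkeeping once one notices that a permutation word acts invertibly on preimages; I do not foresee any real obstacle there. The transitivity assumption is used exactly through the standard fact that on a transitive $G$-set the $G$-invariants in $\MBQ^n$ are the constant vectors, together with the observation that $\kk_v$ lies in the hyperplane of zero coordinate sum.
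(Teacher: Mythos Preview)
Your proof is correct and follows essentially the same approach as the paper: both show $-y\in K_\infty$ for each $y\in T_\infty$ by summing the $G$-translates of $y$, using transitivity to conclude the sum is a constant vector, and then using the zero-coordinate-sum property to force that constant to vanish. Your version is slightly more explicit in deriving the identity $\kk_{vw}=\kk_v\cdot\delta(w)$ that underlies the $G$-stability of $T_\infty$ (the paper encodes the same fact via its matrix notation $x[g]$), but the structure of the argument is identical.
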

\begin{proof}
	It is sufficient to prove $-x \in K_{\infty} = \cone(T_{\infty})$ for each $x \in T_{\infty}$.
	Take an arbitrary $x \in T_{\infty}$ and let
	\[
		y := \sum_{g \in G} x[g].
	\]
	It is clear that $y \in K_{\infty}$.
	Let $i$ and $j$ be two arbitrary integers in $\{1,\ldots, n\}$.
	Since $G$ is transitive, there exists $h \in G$ such that $i.h = j$.
	Note that $y[h](j) = y(i)$ and 
	\[
		y[h] = \sum_{g \in G} x[g][h] = \sum_{g \in G} x[g] = y.
	\]
	Then $y(i) = y(j)$. 
	By the arbitrarity of $i$ and $j$, it holds that $y = c \chr_Q$ for some $c \in \MBQ$.
	Since $y \in K_{\infty}$, by \cref{lem:elem}~\cref{lem:sumzero}, we have $\sum_{i = 1}^n y(i) = 0$. This implies $c = 0$ and then 
	\[
		-x  = \sum_{g \in G \setminus \{\id\}} x[g],
	\]
	where $\id$ is the identity map in $\Sym(Q)$.
	Since $x[g] \in T_{\infty}$ for all $g \in G$, we obtain that $-x \in K_{\infty}$ as wanted.
\end{proof}

Now, we go back to prove \cref{prop:ell}. The following dimension argument plays a crucial role in our proof.

\begin{lemma} \label{lem:LA}
	Let $L$ be a subspace of $\MBQ^n$ and a non-zero vector $x \in L$. If there exists a word $w \in \Sigma^*$ such that $x[w] \notin L$ then there exists a word $w' \in \Sigma^*$  such that $x[w'] \notin L$ and $|w'| \le \Dim(L)$.
\end{lemma}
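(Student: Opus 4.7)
The plan is to track the ``orbit spaces'' of $x$ under the monoid $\{[w] : w \in \Sigma^*\}$ and run a dimension count inside $L$. For each $i \ge 0$, define
$M_i := \Span\{x[u] : u \in \Sigma^{\le i}\}$.
These form an ascending chain with $M_0 = \Span\{x\}$, which has dimension $1$ because $x \ne 0$. Note also that, by definition, $M_j \not\subseteq L$ is equivalent to the existence of a word of length $\le j$ taking $x$ outside $L$.

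The first step I would carry out is a \emph{sticky} property of the chain: once $M_i = M_{i+1}$ for some $i$, the sequence stabilizes, i.e.\ $M_j = M_i$ for all $j \ge i$. The reason is that any word of length $i+2$ can be written as $u'a$ with $|u'| = i+1$, and $x[u'] \in M_{i+1} = M_i$ expresses as a rational combination of vectors $x[v]$ with $|v| \le i$; applying $[a]$ term-by-term then writes $x[u'a]$ as a combination of vectors $x[va]$, each of word length $\le i+1$, hence in $M_{i+1} = M_i$. Words of length $\le i+1$ are already in $M_i$, so $M_{i+2} \subseteq M_i$, and an obvious induction finishes.

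The second step is a short pigeonhole-style count. Let $k^\ast$ be the smallest integer for which some word of length $k^\ast$ pushes $x$ outside $L$; by the equivalence noted above, minimality says $M_{k^\ast - 1} \subseteq L$ while $M_{k^\ast} \not\subseteq L$. If the chain $M_0 \subsetneq M_1 \subsetneq \cdots \subsetneq M_{k^\ast - 1}$ were strict at every step, the dimensions would grow by at least $1$ each step, yielding $\Dim(M_{k^\ast - 1}) \ge k^\ast$ and hence $k^\ast \le \Dim(L)$ directly from $M_{k^\ast - 1} \subseteq L$. Otherwise $M_i = M_{i+1}$ for some $i \le k^\ast - 2$, and the sticky property propagates the equality all the way up to $M_{k^\ast} = M_i \subseteq L$, contradicting $M_{k^\ast} \not\subseteq L$. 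Taking $w'$ to be a word of length $k^\ast$ with $x[w'] \notin L$ completes the proof.

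The only real content is the sticky step, and that is a routine linear-algebra manipulation; I do not anticipate a genuine obstacle. It is worth noting that no synchronization, transitivity, or automaton-theoretic hypothesis is used — the lemma is a purely linear-algebraic statement about how quickly a vector can escape a subspace under a one-sided matrix-monoid action.
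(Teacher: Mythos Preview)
Your proof is correct and follows essentially the same approach as the paper: both define the ascending chain $M_i = \Span\{x[u] : u \in \Sigma^{\le i}\}$, use the stabilization (``sticky'') property of this chain, and run the same dimension count $\Dim(M_{k^\ast-1}) \ge k^\ast$ together with $M_{k^\ast-1} \subseteq L$. Your write-up is in fact more explicit than the paper's, which simply asserts the existence of the stabilization index $j$ and the inequality $t \le j$ without spelling out the argument.
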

\begin{proof}
	For every nonnegative integer $i$, define $L_i := \Span(xv : v \in \Sigma^{\le i})$.
	Observe that there exists a unique integer $j$ such that
	\[
		L_0 \subsetneq L_1 \subsetneq \cdots \subsetneq L_j = L_{j+1} = \cdots
	\]
	Let $t$ be the minimum integer such that $L_t \nsubseteq L$. Observing that $t \le j$, we have
	\[
		t  = \Dim(L_0) + t - 1 \le \Dim(L_{t-1}) \le \Dim(L).
	\]
	Then there exists a word $w'$ of length $\le \Dim(L)$ such that $x[w'] \notin L$.
\end{proof}

\begin{proof}[Proof of \cref{prop:ell}]
	Since $K_{\infty}$ is a linear subspace of $\MBQ^n$, it holds that $(K_{\infty})^{\circ} = (K_{\infty})^{\perp}$ is also a linear subspace of $\MBQ^n$.   
	Observe that $\chr_Q \in (K_{\infty})^{\circ}$. Then we can decompose $(K_{\infty})^{\circ}$ as $(K_{\infty})^{\circ} = V_0 \oplus V_1$, where 
	\[
		V_0 = \left\{x \in (K_{\infty})^{\circ} : \langle \chr_Q, \kk_w \rangle = 0 \right\} \quad \text{and} \quad V_1 = \Span(\chr_Q).
	\]
	For every subset $R \subseteq Q$, define $\mathbf{p}_R := \chr_R - \frac{|R|}{n}\chr_Q$.
	For each $R \subseteq Q$, observe that $\chr_R \in (K_{\infty})^{\circ}$ if and only if $\mathbf{p}_R \in V_0$.

	Let $S$ be a nonempty proper subset of $Q$ such that $\chr_S \in (K_{\infty})^{\circ}$.
	Since $\MCA$ is synchronizing and strongly connected, let $w'$ be a reset word such that $Q.w' \subseteq S$.
	Then 
	\[
		\mathbf{p}_S[w'] =  \chr_S [w'] - \frac{|S|}{n}\chr_Q [w'] = \left(1 - \frac{|S|}{n} \right)\chr_Q \notin V_0.
	\] 
	Let $w = va$ be a shortest word such that $\mathbf{p}_S[w] \notin V_0$. \cref{lem:LA} provides that the length of $w$ is at most $\Dim(V_0) = n - 1 -  \Dim(K_{\infty})$. 

	We will complete the proof by showing $\chr_{S.w^{-1}} \notin (K_{\infty})^{\circ}$.
	Note that $\mathbf{p}_S[v] = \chr_{S.v^{-1}} - \frac{|S|}{n}\chr_Q$. Since $\mathbf{p}_S[v], \chr_Q \in (K_{\infty})^{\circ}$, we have $\chr_{S.v^{-1}} \in (K_{\infty})^{\circ}$. By \cref{lem:elem}~\cref{lem:larger}, $|S| = |S.v^{-1}| = |S.w^{-1}|$.  Hence,
	\[
		\mathbf{p}_{S.w^{-1}} =  \chr_{S.w^{-1}}- \frac{|S.w^{-1}|}{n}\chr_Q = \chr_{S.w^{-1}}- \frac{|S|}{n}\chr_Q  = \mathbf{p}_S[w] \notin V_0
	\]
	which is equivalent to $\chr_{S.w^{-1}} \notin (K_{\infty})^{\circ}$. 
\end{proof}

Combining \cref{prop:ExtensionMethod,lem:transitivity,prop:ext,prop:ell}, we establish the following bound.

\begin{theorem} \label{thm:main}
	Let $\MCA = (Q, \Sigma, \delta) \in \ST$ be an $n$-state automaton.	Then $\rt(\MCA) \le 1 + (n-2)(n - \Dim(K_{\infty}) + \transLen(\MCK(\MCA, A)))$, where
	$A \subseteq \Sigma_0$ such that $\delta(A^*)$ is transitive. 
\end{theorem}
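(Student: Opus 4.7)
The plan is to simply chain together the four preceding results \cref{prop:ExtensionMethod}, \cref{lem:transitivity}, \cref{prop:ext}, and \cref{prop:ell}, since the statement is designed to be the end-product of the machinery built in this section. The only hypothesis of the ingredient lemmas that is not already given outright in the statement of the theorem is strong connectivity of $\MCA$, so the first thing I would verify is that $\MCA \in \ST$ together with $\delta(A^*)$ being transitive forces $\MCA$ to be strongly connected. Given $p,q \in Q$, transitivity of $G = \delta(A^*)$ supplies $g \in G$ with $p.g = q$; writing $g = \delta(w)$ for some $w \in A^* \subseteq \Sigma^*$ gives $p.w = q$, as required.

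Once strong connectivity is in hand, I would apply \cref{lem:transitivity} to deduce that $K_{\infty}$ is a linear subspace of $\MBQ^n$ (spanned by $T_{\infty}$). This is precisely the hypothesis needed to invoke \cref{prop:ell}, which then gives the uniform bound $\ell(S) \le n - 1 - \Dim(K_{\infty})$ for every nonempty proper subset $S \subsetneq Q$. Next, by \cref{prop:ext}, every nonempty proper subset $S$ of $Q$ is $(\transLen(\MCK) + \ell(S) + 1)$-extensible, and combining with the bound on $\ell(S)$ shows that every such $S$ is $m$-extensible with
\[
m \;=\; \transLen(\MCK) + (n-1-\Dim(K_{\infty})) + 1 \;=\; n - \Dim(K_{\infty}) + \transLen(\MCK(\MCA,A)).
\]

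Finally, feeding this uniform extension bound into \cref{prop:ExtensionMethod} yields $\rt(\MCA) \le 1 + (n-2)m$, which is exactly the claimed inequality. There is no genuine obstacle to this proof: the theorem is a direct assembly, and the only subtlety worth flagging is the short deduction of strong connectivity from transitivity of $G$, which unlocks the use of both \cref{prop:ext} and \cref{prop:ell}.
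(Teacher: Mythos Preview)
Your proposal is correct and matches the paper's own approach exactly: the paper states the theorem as a direct combination of \cref{prop:ExtensionMethod}, \cref{lem:transitivity}, \cref{prop:ext}, and \cref{prop:ell}, and your chaining of these results (with the explicit verification of strong connectivity from transitivity of $G$) is precisely the intended argument.
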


\begin{remark}
	Note that $\Dim(K_{\infty}) \ge 1$ and $\dd_{\MCA}(A) \ge \transLen(\MCK(\MCA, A))$.
	\begin{enumerate}
		\item \cref{thm:main} improves \cref{thm:Rystsov}.
		\item One of \cref{thm:Steinberg} and \cref{thm:main} cannot deduce the other one. If we only want to establish a quadratic upper bound for reset thresholds of a special class of automata, \cref{thm:main} may have more advantages.
	\end{enumerate}
\end{remark}

\section{Rystsov digraphs} \label{sec: RystsovDigraphs}

% Let $\MCA = (Q, \Sigma, \delta)$ be an $n$-state automaton. In this section, we assume that $\Sigma = \Sigma_0 \cup \Sigma_1$.
% Using some graph-theoretic argument, we will bound $\transLen(\MCK)$ by a linear function of $n$ if $\MCA \in \ST$. 
This section is divided into two parts:
\begin{itemize}
\item In \cref{sec:growth}, we establish some results for digraphs.
\item In \cref{sec: Automata}, we derive some directed graphs from automata. Using the results in \cref{sec:growth} and \cref{thm:main}, we prove \cref{thm:transitive}.
\end{itemize}

\subsection{Digraphs} \label{sec:growth}

Firstly, we recall some notations of digraphs.
A \emph{digraph} $\Gamma = (V,E)$ is an ordered pair of sets such that $E \subseteq V \times V$. The set $V$ is called the \emph{vertex set} of $\Gamma$ and the $E$ is called the \emph{arc set} of $\Gamma$. 
We assume that the digraphs in this section are \emph{loop-free}, that is, $(v,v)$ is not an arc for every vertex $v$.
Let $u$ and $v$ be two vertices of $\Gamma$. 
A sequence of vertices $(v_1, \ldots, v_t)$ is called a \emph{path} from $u$ to $v$ if $v_1 = u$, $v_t = v$ and $(v_i, v_{i+1}) \in E$ for every $1 \le i \le t -1$.
We say that $u$ and $v$ are \emph{connected} if there exists a sequence of vertices $(v_1, \ldots, v_t)$ such that $v_1 = u$, $v_t = v$ and either $(v_i, v_{i+1}) \in E$ or $(v_{i+1}, v_i) \in E$ for every $1 \le i \le t -1$.
The \emph{strong connectivity} of $\Gamma$, denoted $\strc(\Gamma)$, is the equivalence relation such that $(u,v) \in \strc(\Gamma)$ if and only if there exist a path from $u$ to $v$ and a path from $v$ to $u$.
The \emph{weak connectivity} of $\Gamma$, denoted $\weakc(\Gamma)$, is the equivalence relation such that $(u,v) \in \weakc(\Gamma)$ if and only if $u,v$ are connected.
Equivalence classes of $\strc(\Gamma)$ ($\weakc(\Gamma)$ resp.,) are called a \emph{strongly connected components} (\emph{weakly connected components} resp.,) of $\Gamma$.
Write $\Strong(\Gamma)$ ($\Weak(\Gamma)$, $\Sink(\Gamma)$, resp.) for the set of strongly connected components (weakly connected components, sink components, resp.) of $\Gamma$.
A strongly connected component $C$  of $\Gamma$ is called a \emph{sink component} of $\Gamma$ if there is no arc $(u,v) \in E$ such that $u \in C$ and $v \notin C$; is called a \emph{source component} of $\Gamma$ if there is no arc $(u,v) \in E$ such that $u \notin C$ and $v \in C$.
A strongly connected component is \emph{non-sink} (\emph{non-source}, resp.) if it is not a sink (source, resp.) component of $\Gamma$.

Let $V = \{1, \ldots, n\}$ and $A \subseteq \Sym(V)$.
We will say that the sequence $\Gamma_0, \Gamma_1, \ldots$ is an \emph{$A$-growth} of $\Gamma_0$ if
	\begin{itemize}
		\item $\Gamma_0 = (V, E_0)$ is a digraph with vertex set $V$;
		\item for every positive integer $i$, $\Gamma_i = (V,E_i)$ is the digraph such that $E_i = \{(p.w, q.w) : (p,q)\in E_0, w \in A^{\le i}\}$.
	\end{itemize}
This concept was firstly used in \cite{Rys00}, and appear widely in the research of synchronizing automata \cite{BCV23,BV16,GGJGV19,Rys00,RS24}.

In the rest of \cref{sec:growth}, we set $\Gamma_0, \Gamma_1, \ldots$ is a $A$-growth of $\Gamma_0$.
Write $G = \langle A \rangle$. Since $G$ is finite, the sequence $\Gamma_0, \Gamma_1, \ldots$ is eventually constant. Denote $\limit(\Gamma_0, \Gamma_1, \ldots)$ by $\Gamma_{\infty}$.

\begin{lemma} \label{lem:graph}
	If $G$ is transitive, then $\weakc(\Gamma_{\infty}) = \strc(\Gamma_{\infty})$.
\end{lemma}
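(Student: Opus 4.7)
The plan is to exploit the fact that after stabilization the arc set $E_{\infty}$ is $G$-invariant: unwinding the definition of an $A$-growth gives $E_{\infty} = \{(p.g, q.g) : (p,q) \in E_0,\, g \in G\}$, which is manifestly closed under the coordinate-wise action of $G$. Thus $G$ acts on $\Gamma_{\infty}$ by digraph automorphisms. This is the single structural fact on which everything else rests, and it requires working with $\Gamma_{\infty}$ rather than any finite $\Gamma_i$.

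The first step is to show that $\Gamma_{\infty}$ is \emph{degree-balanced}, i.e.\ the out-degree and in-degree of every vertex agree. Transitivity of $G$ together with the automorphism action forces a common out-degree $d^+$ and a common in-degree $d^-$ across all vertices; then the identity $\sum_v d^+(v) = |E_{\infty}| = \sum_v d^-(v)$ forces $d^+ = d^-$. Summing this pointwise equality over $v \in S$ and cancelling arcs internal to $S$ yields the cut-balance identity: for every $S \subseteq V$, the number of arcs of $\Gamma_{\infty}$ from $S$ to $V \setminus S$ equals the number from $V \setminus S$ to $S$.

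From cut-balance the lemma follows. The inclusion $\strc(\Gamma_{\infty}) \subseteq \weakc(\Gamma_{\infty})$ is automatic, so only the reverse inclusion needs attention. Given $u \in V$, let $S$ be the set of vertices reachable from $u$ in $\Gamma_{\infty}$. By construction no arc leaves $S$; by cut-balance, no arc enters $S$ either, so no arc of $\Gamma_{\infty}$ has exactly one endpoint in $S$. Consequently any undirected walk starting at $u$ is trapped in $S$, and every vertex $v$ weakly connected to $u$ lies in $S$, i.e.\ is forward-reachable from $u$. Applying the symmetric argument starting from $v$ shows $u$ is reachable from $v$, so $u$ and $v$ lie in a common strongly connected component.

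I do not foresee a substantive obstacle. The only delicate point is the very first one: the symmetry $E_{\infty} = \{(p.g, q.g) : (p,q) \in E_{\infty},\, g \in G\}$ is crucial and only becomes available at the stabilized stage, so one must be careful to state the lemma for $\Gamma_{\infty}$ and not for generic members of the $A$-growth.
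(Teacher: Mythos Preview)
Your proposal is correct and follows the same approach as the paper: both arguments rest on the observation that $G$ acts on $\Gamma_{\infty}$ by automorphisms, making $\Gamma_{\infty}$ a finite vertex-transitive digraph. The paper simply cites the standard fact that such digraphs have coinciding weak and strong connectivity, whereas you spell out one of the usual proofs of that fact via the degree-balance (Eulerian) argument.
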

\begin{proof}
	It follows the fact that $\Gamma_{\infty}$ is a finite vertex-transitive digraph.
\end{proof}

For all $(p,q) \in V \times V$, the \emph{incidence vector} of $(p,q)$ is the vector $\chr_p - \chr_q$, denoted $\chi_{(p,q)}$. 
For a digraph $\Gamma = (V,E)$, write $L(\Gamma)$ for the subspace $\Span(\chi_e : e \in E)$ of $\MBQ^n$.
The following result is well-known in algebraic  graph theory (see \cite[Chapter 4]{Biggs74}). 
\begin{lemma} \label{lem:translate}
	Assume that $W_1, \ldots, W_d$ are all weakly connected components of a digraph $\Gamma = (V,E)$. 
	Then the subspace $(L(\Gamma))^{\perp}$ is the $d$-dimensional subspace $\Span (\chr_{W_1}, \ldots, \chr_{W_d})$.
\end{lemma}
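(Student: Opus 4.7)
The plan is to characterize $(L(\Gamma))^{\perp}$ directly from the definition of the inner product and then match it with the span of the component indicator vectors. A vector $y \in \MBQ^n$ lies in $(L(\Gamma))^{\perp}$ if and only if $\langle y, \chi_e \rangle = 0$ for every arc $e = (p,q) \in E$. Unpacking this, it reads $y(p) = y(q)$. Note that this equation is symmetric in $p$ and $q$, so the orientation of the arc plays no role — which is exactly why weak (rather than strong) components appear in the answer.

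First I would establish the inclusion $\supseteq$. Each $\chr_{W_i}$ takes the constant value $1$ on $W_i$ and $0$ elsewhere; in particular it is constant on every weakly connected component, so $\chr_{W_i}(p) = \chr_{W_i}(q)$ for every arc $(p,q) \in E$ (since both endpoints lie in the same weak component). Hence $\chr_{W_i} \in (L(\Gamma))^{\perp}$ for each $i$, and so $\Span(\chr_{W_1}, \ldots, \chr_{W_d}) \subseteq (L(\Gamma))^{\perp}$.

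For the reverse inclusion $\subseteq$, take any $y \in (L(\Gamma))^{\perp}$. The relation $y(p) = y(q)$ holds for every arc, and by iterating along a sequence of vertices witnessing weak connectivity (as in the definition of $\weakc(\Gamma)$ from \cref{sec:growth}), we get $y(u) = y(v)$ whenever $u, v$ belong to the same weakly connected component. Letting $c_i$ denote the common value of $y$ on $W_i$, we obtain $y = \sum_{i=1}^{d} c_i \chr_{W_i}$, which gives membership in $\Span(\chr_{W_1}, \ldots, \chr_{W_d})$. Finally, since $\{W_1, \ldots, W_d\}$ is a partition of $V$, the indicators $\chr_{W_1}, \ldots, \chr_{W_d}$ have pairwise disjoint supports and are therefore linearly independent, so the dimension of the span is exactly $d$.

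There is no real obstacle here; the only point that requires the slightest care is the propagation step, where one must use the undirected notion of connectedness rather than the directed one (otherwise one would recover $\Span(\chr_{C} : C \in \Strong(\Gamma))$, which is generally larger). Everything else is a routine unpacking of definitions and a disjoint-support argument for linear independence.
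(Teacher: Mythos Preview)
Your argument is correct: the characterization $y \in (L(\Gamma))^\perp \iff y(p)=y(q)$ for every arc $(p,q)$, the propagation of this equality along undirected walks to obtain constancy on weak components, and the disjoint-support argument for linear independence are all sound. The paper does not supply its own proof of this lemma; it simply records the statement as well known and cites \cite[Chapter~4]{Biggs74}, so your write-up in fact fills in what the paper leaves to the literature.
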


\begin{lemma} \label{lem:weaklyconn}
	Let $d = |\Strong(\Gamma_{\infty})|$.
	If the arc set of $\Gamma_0$ is nonempty and $G$ is transitive, then $\weakc(\Gamma_{n - d - 1}) = \weakc(\Gamma_{\infty}) = \strc(\Gamma_{\infty})$.
\end{lemma}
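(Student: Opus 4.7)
The plan is to translate the question from weakly connected components to the subspaces $L(\Gamma_i) \subseteq \MBQ^n$ via \cref{lem:translate}, to show that the sequence $L(\Gamma_0) \subseteq L(\Gamma_1) \subseteq \cdots$ strictly increases in dimension until it stabilizes, and then to bound the number of strict increases by a dimension count. The second equality $\weakc(\Gamma_{\infty}) = \strc(\Gamma_{\infty})$ is immediate from \cref{lem:graph} (which uses transitivity of $G$), so the substantive work is to prove $\weakc(\Gamma_{n-d-1}) = \weakc(\Gamma_{\infty})$.

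For the translation step, \cref{lem:translate} tells me that the partition into weakly connected components of $\Gamma$ is determined by $L(\Gamma)^{\perp}$, and so by $L(\Gamma)$ itself. It therefore suffices to prove $L(\Gamma_{n-d-1}) = L(\Gamma_{\infty})$. Unfolding the definition of the $A$-growth, one checks that $E_{i+1} = E_i \cup \bigcup_{a \in A} a \cdot E_i$, where $a$ acts on an arc by $(p,q) \mapsto (p.a,q.a)$; passing to incidence vectors, this becomes
\[
	L(\Gamma_{i+1}) = L(\Gamma_i) + \sum_{a \in A} a \cdot L(\Gamma_i),
\]
so $(L(\Gamma_i))_{i \ge 0}$ is a non-decreasing chain of subspaces of $\MBQ^n$.

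The key step, and the main obstacle, is the following claim: as soon as two consecutive terms of the chain coincide, the chain is stationary thereafter. My approach is to observe that if $L(\Gamma_i) = L(\Gamma_{i+1})$, then the identity above forces $a \cdot L(\Gamma_i) \subseteq L(\Gamma_i)$ for every $a \in A$; since each $a$ acts as an invertible linear map on $\MBQ^n$, this containment is automatically an equality, and hence $L(\Gamma_i)$ is $\langle A \rangle$-invariant. A short induction, reusing the same identity, then yields $L(\Gamma_j) = L(\Gamma_i)$ for all $j \ge i$. Consequently, $\Dim L(\Gamma_i)$ strictly increases until it stabilizes, after which it is constant.

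To conclude, I will bound the initial and final dimensions. Since $\Gamma_0$ is loop-free with nonempty arc set, every incidence vector $\chi_{(p,q)}$ with $p \ne q$ is nonzero, so $\Dim L(\Gamma_0) \ge 1$. By \cref{lem:translate} together with \cref{lem:graph} (which gives $|\Weak(\Gamma_{\infty})| = |\Strong(\Gamma_{\infty})| = d$), we have $\Dim L(\Gamma_{\infty}) = n - d$. Thus the chain admits at most $(n - d) - 1 = n - d - 1$ strict increases, so $L(\Gamma_{n-d-1}) = L(\Gamma_{\infty})$, which via \cref{lem:translate} yields $\weakc(\Gamma_{n-d-1}) = \weakc(\Gamma_{\infty})$, as required.
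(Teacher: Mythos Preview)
Your proof is correct and follows essentially the same approach as the paper: translate weak connectivity to the subspaces $L(\Gamma_i)$ via \cref{lem:translate}, use the dimension count between $\Dim L(\Gamma_0)\ge 1$ and $\Dim L(\Gamma_{\infty})=n-d$, and invoke \cref{lem:graph} for the second equality. You are in fact a bit more careful than the paper, since you explicitly justify (via the identity $L(\Gamma_{i+1}) = L(\Gamma_i) + \sum_{a\in A} a\cdot L(\Gamma_i)$ and the resulting $A$-invariance) why $L(\Gamma_i)=L(\Gamma_{i+1})$ forces the chain to be stationary thereafter, a point the paper leaves implicit.
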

\begin{proof}
	By \cref{lem:graph}, $\weakc(\Gamma_{\infty}) = \strc(\Gamma_{\infty})$.
	It is sufficient to show $\weakc(\Gamma_{n - d - 1}) = \weakc(\Gamma_{\infty})$.

	Define  $\mathcal{L} = (L_i)_{i \ge 0}$  to be the sequence of linear spaces, where $L_i := L(\Gamma_i)$ for every $i \ge 0$. Write $L_{\infty} = \limit \mathcal{L}$.
	By \cref{lem:translate}, $\Dim((L_{\infty})^{\perp})= d$ and then $\Dim(L_{\infty}) = n - d$.
	For every $i < \transLen(\mathcal{L})$, since $L_i \subsetneq L_{i+1}$, we have $\Dim(L_i) < \Dim(L_{i+1})$.
	Due to $E_0 \neq \emptyset$, it holds that $\Dim(L_0) \ge 1$ and then $L_{n - d - 1} = L_{\infty}$.
	By \cref{lem:translate},  $\weakc(\Gamma_{n - d - 1}) = \weakc(\Gamma_{\infty})$.
\end{proof}

\begin{lemma} \label{lem:degree}
	Assume that $E_0 \neq \emptyset$ and $G$ is transitive.
	For every $v \in V$, there exist two vertices $p,q \in V$ such that $(v,p), (q,v) \in E_{n-1}$.
\end{lemma}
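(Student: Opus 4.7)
The plan is to prove the two statements symmetrically; I describe the argument for out-neighborhoods and note that the in-neighborhood case is identical. For each $i \ge 0$ and $v \in V$, write $N_i^+(v) := \{p \in V : (v,p) \in E_i\}$, and set
\[
    S_i := \{v \in V : N_i^+(v) \neq \emptyset\}.
\]
The goal reduces to $S_{n-1} = V$.

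The first step is to observe the growth relation $S_i . A \subseteq S_{i+1}$. This is immediate from the definition of the $A$-growth: if $(v,p) \in E_i$, then $v = p_0.w$ and $p = q_0.w$ for some $(p_0,q_0) \in E_0$ and $w \in A^{\le i}$, hence for any $a \in A$ the pair $(v.a, p.a) = (p_0.(wa), q_0.(wa))$ lies in $E_{i+1}$. In particular $S_i \subseteq S_{i+1}$, and moreover $S_i$ is stable under $A$ whenever $S_i = S_{i+1}$.

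The key step is the strict growth lemma: if $S_i \neq V$, then $|S_{i+1}| > |S_i|$. I would argue by contrapositive. Suppose $S_i = S_{i+1}$; by the previous observation $S_i . A \subseteq S_i$, so $S_i$ is invariant under the semigroup generated by $A$, which is the group $G = \langle A \rangle$. Since $E_0 \neq \emptyset$ we have $S_0 \neq \emptyset$, so $S_i \neq \emptyset$; transitivity of $G$ then forces $S_i = V$.

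Finally, combine $|S_0| \ge 1$ with the strict growth lemma: so long as $S_{i-1} \ne V$, we gain at least one new vertex, giving $|S_i| \ge i+1$; once $S_i = V$ the chain stabilizes at $V$. In either scenario $|S_{n-1}| \ge n$, so $S_{n-1} = V$, which furnishes the required $p$ for every $v$. The symmetric argument, replacing $N^+$ by the in-neighborhood $N^-_i(v) := \{q : (q,v) \in E_i\}$ and using $(q.a,v.a) \in E_{i+1}$ whenever $(q,v) \in E_i$, yields the required $q$. I do not expect a real obstacle here; the only nontrivial ingredient is recognizing that ``vertices with nonempty out-neighborhood'' is $A$-invariant precisely when it has stopped growing, after which transitivity closes the argument.
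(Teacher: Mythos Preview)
Your proof is correct. Both your argument and the paper's rest on the same elementary fact: under a transitive action of $G=\langle A\rangle$ on $n$ points, any target can be reached by a word in $A^{\le n-1}$, because the successive images $\{x\}.A^{\le i}$ strictly grow until they fill $V$. The paper applies this \emph{pointwise}: it fixes a single arc $(x,y)\in E_0$ and, for a given $v$, finds $w,w'\in A^{\le n-1}$ with $x.w=v$ and $y.w'=v$, producing the out-arc $(v,y.w)$ and the in-arc $(x.w',v)$ in $E_{n-1}$ in one stroke. You instead run the same expansion argument \emph{setwise}, tracking $S_i=\{v:N_i^+(v)\neq\emptyset\}$ and showing it strictly grows until it hits $V$. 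Your route is slightly more elaborate but has the minor conceptual payoff of making explicit that the property ``has an out-edge in $\Gamma_i$'' is $A$-invariant once it stabilizes; the paper's version is shorter because it never needs to name the sets $S_i$ at all.
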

\begin{proof}
	Since $E_0 \neq \emptyset$, let $x,y$ be two vertices such that $(x,y) \in E_0$.
	Let $v$ be an arbitrary vertex in $V$.
	Since $G$ is transitive, there exist two words $w, w' \in A^{\le n-1}$ such that $x.w = y.w' = v$.
	Then $(v, y.w), (x.w', v) \in E_{n-1}$.
\end{proof}

\begin{lemma} \label{lem:nonSink}
	Let $i$ be a positive integer. If $\strc(\Gamma_{i + 1}) = \strc(\Gamma_i) \neq \strc(\Gamma_{\infty})$ and $G$ is transitive,
	then there exists a non-sink strongly connected component $C$ of $\Gamma_i$ and $a \in A$ such that $C.a$ is a sink component of $\Gamma_i$.
\end{lemma}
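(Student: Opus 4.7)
The plan is to argue by contradiction. Suppose that for every non-sink $C \in \Strong(\Gamma_i)$ and every $a \in A$, the set $C.a$ is not a sink SCC of $\Gamma_i$; I will derive $\strc(\Gamma_\infty) = \strc(\Gamma_i)$, contradicting the hypothesis of the lemma.

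The first step, using only $\strc(\Gamma_{i+1}) = \strc(\Gamma_i)$, is to show that each $a \in A$ induces a bijection $f_a$ on $\Strong(\Gamma_i)$ with $C.a = f_a(C)$ for every SCC $C$. Indeed, $a$ is a permutation of $V$, so $\{C.a : C \in \Strong(\Gamma_i)\}$ is a partition of $V$ into $|\Strong(\Gamma_i)|$ blocks. Each $C.a$ is strongly connected in $\Gamma_i.a \subseteq \Gamma_{i+1}$, hence sits inside a single SCC $f_a(C)$ of $\Gamma_{i+1} = \Gamma_i$. Thus $\{C.a\}$ refines $\Strong(\Gamma_i)$; as both partitions have the same number of blocks, they must coincide.

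The second step exploits the contradiction hypothesis. If $C$ is non-sink, then $f_a(C) = C.a$ is not a sink SCC, and so $f_a(C)$ is itself non-sink; since $f_a$ is a bijection, it must also map sinks to sinks. Consequently $V_{\mathcal S} := \bigcup_{D \in \Sink(\Gamma_i)} D$ satisfies $V_{\mathcal S}.a = V_{\mathcal S}$ for every $a \in A$, so $V_{\mathcal S}$ is invariant under the action of $G = \langle A \rangle$. Every finite DAG has a sink, so $V_{\mathcal S}$ is nonempty, and transitivity of $G$ then forces $V_{\mathcal S} = V$. Hence every SCC of $\Gamma_i$ is itself a sink, which means $\Gamma_i$ has no arcs between distinct SCCs.

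The final step is a straightforward induction on $j \ge i$ showing that $\strc(\Gamma_j) = \strc(\Gamma_i)$ and every arc of $\Gamma_j$ is internal to some SCC of $\Gamma_i$. For the inductive step, $E_{j+1} = E_j \cup \bigcup_{a \in A} E_j.a$; an arc $(p, q) \in E_j$ lying inside an SCC $C$ of $\Gamma_i$ is carried by $a$ to an arc inside $C.a = f_a(C)$, still an SCC of $\Gamma_i$, so no new inter-SCC arcs are ever created. Passing to the limit yields $\strc(\Gamma_\infty) = \strc(\Gamma_i)$, the desired contradiction. The main subtlety lies in the first step: upgrading the containment $C.a \subseteq f_a(C)$ to equality is what converts the assumption ``no non-sink maps to a sink SCC'' into genuine $G$-invariance of $V_{\mathcal S}$, and without this upgrade one cannot apply transitivity of $G$.
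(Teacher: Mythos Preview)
Your proof is correct and follows essentially the same strategy as the paper's: both argue by contradiction, first show (using $\strc(\Gamma_{i+1}) = \strc(\Gamma_i)$) that each $a \in A$ permutes the set $\Strong(\Gamma_i)$, and then invoke transitivity of $G$ to reach a contradiction. The only cosmetic difference is that the paper phrases the transitivity argument on the quotient set $\Strong(\Gamma_i)$ and concludes in one line, whereas you work with the $G$-invariant union of sink components inside $V$ and append an explicit induction to obtain $\strc(\Gamma_\infty) = \strc(\Gamma_i)$; the underlying idea is identical.
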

\begin{proof}
	% Since $A$ preserves $\strc(\Gamma_i)$, every element $a \in A$ induces a permutation $a'$ on $\Strong(\Gamma_i)$.
	% Let $A'$ be the set $\{a' : a \in A\}$.
	% Since $G = \langle A \rangle$ is transitive on $V$, the permutation group $\langle A' \rangle$ is transitive on $\Strong(\Gamma_i)$.
	% Since $\Gamma_i$

	%  Then there exists a non-sink strongly connected component $C$ of $\Gamma_i$ and $a \in A$ such that $C.a$ is a sink component of $\Gamma_i$.

	Assume, for a contradiction, for all non-sink strongly connected component $C$ and $a \in A$ such that $C.a \in  \Strong(\Gamma_i) \setminus \Sink(\Gamma_i)$.
	Since $\strc(\Gamma_{i + 1}) = \strc(\Gamma_i)$, $X . a \in \Strong(\Gamma_i)$ for each $X \in \Strong(\Gamma_i)$ and $a \in A$.
	For every $a \in A$, let $a'$ be the permutation on $\Strong(\Gamma_i)$ such that $X.a' = X.a$ for any $X \in \Strong(\Gamma_i)$.
	Let $A'$ be the set $\{a' : a \in A\}$.
	Since $G = \langle A \rangle$ is transitive on $V$, the permutation group $\langle A' \rangle$ is transitive on $\Strong(\Gamma_i)$.
	Then there exist $C \in \Strong(\Gamma_i) \setminus \Sink(\Gamma_i)$ and $a \in A$ such that $C.a \in \Sink(\Gamma_i)$.

	% Then the restriction of $\overline{a}$ on $\Sink(\Gamma_i)$ is a permutation.
	% Hence,
	% \begin{equation} \label{eq:sink1}
	% 	\Sink(\Gamma_i) \subseteq \Strong(\Gamma_{\infty}).
	% \end{equation}
	
	% Let $W$ be an arbitrary weakly connected component of $\Gamma_i$ 
	% and let $S \subseteq W$ be a sink component of $\Gamma_i$.
	% Since $\weakc(\Gamma_{\infty}) = \strc(\Gamma_{\infty})$,  $W$ is contained in a strongly connected component $T$ of $\Gamma_{\infty}$.
	% Since $S \subseteq W \subseteq T$ and \cref{eq:sink1}, it holds that $S = W =T$. Hence, $\strc(\Gamma_i) = \strc(\Gamma_{\infty})$ which is a contradiction.
\end{proof}

% For a binary relation $\rho$ on $V$, we say that $A$ \emph{preserves} $\rho$ if $(p.a, q.a) \in \rho$ for all $(p,q) \in \rho$ and $a \in A$.

\begin{lemma} \label{lem:decrease}
	Let $i$ be a non-negative integer such that $\strc(\Gamma_i) \neq \strc(\Gamma_{\infty})$. 
	If $G$ is transitive, then either 
	\[
		|\Strong(\Gamma_i)| > |\Strong(\Gamma_{i+1})|
	\]
	or 
	\[
		|\Sink(\Gamma_i)| > |\Sink(\Gamma_{i+1})|.
	\]
\end{lemma}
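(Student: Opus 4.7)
The plan is to split on whether $\strc(\Gamma_{i+1})$ coincides with $\strc(\Gamma_i)$, and to exploit two simple monotonicity facts coming from $E_i \subseteq E_{i+1}$. Observation A: if two vertices are strongly connected in $\Gamma_i$, they remain so in $\Gamma_{i+1}$, so passing from $\Gamma_i$ to $\Gamma_{i+1}$ can only merge strongly connected components; in particular $|\Strong(\Gamma_{i+1})| \le |\Strong(\Gamma_i)|$. Observation B: if $\strc(\Gamma_{i+1}) = \strc(\Gamma_i)$, then every SCC of $\Gamma_i$ is an SCC of $\Gamma_{i+1}$, and since adding arcs cannot remove outgoing arcs, a sink SCC of $\Gamma_{i+1}$ is automatically a sink of $\Gamma_i$, giving $\Sink(\Gamma_{i+1}) \subseteq \Sink(\Gamma_i)$.

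In the first case, $\strc(\Gamma_{i+1}) \neq \strc(\Gamma_i)$. By Observation~A the strict inequality of equivalence relations can only come from a strict merging of SCCs, so $|\Strong(\Gamma_{i+1})| < |\Strong(\Gamma_i)|$, yielding the first alternative.

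In the second case, $\strc(\Gamma_{i+1}) = \strc(\Gamma_i)$. Here I would invoke \cref{lem:nonSink} (the hypothesis $\strc(\Gamma_i) \neq \strc(\Gamma_\infty)$ of the present lemma propagates, and together with the case assumption supplies its hypotheses) to obtain a non-sink strongly connected component $C$ of $\Gamma_i$ and a letter $a \in A$ with $C.a \in \Sink(\Gamma_i)$. Since $C$ is non-sink in $\Gamma_i$, pick an arc $(u,v) \in E_i$ with $u \in C$ and $v \notin C$. By the definition of the $A$-growth, $(u.a, v.a) \in E_{i+1}$. Because $a$ acts on $V$ as a bijection, $u.a \in C.a$ while $v.a \notin C.a$. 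By the case hypothesis, $C.a$ is still an SCC of $\Gamma_{i+1}$, and it now admits an outgoing arc to a different SCC, so $C.a \notin \Sink(\Gamma_{i+1})$. Combining with Observation~B, the containment $\Sink(\Gamma_{i+1}) \subseteq \Sink(\Gamma_i)$ is strict, yielding $|\Sink(\Gamma_{i+1})| < |\Sink(\Gamma_i)|$.

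The main obstacle is Case~2, where the only delicate point is verifying that $C.a$ really is an intact SCC of $\Gamma_{i+1}$ (neither split across multiple SCCs nor swallowed into a larger one) before concluding that its new outgoing arc removes it from the list of sinks; this, however, is exactly what the case hypothesis $\strc(\Gamma_{i+1}) = \strc(\Gamma_i)$ provides.
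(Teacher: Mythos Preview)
Your proof is correct and follows essentially the same route as the paper: split on whether $\strc(\Gamma_i) = \strc(\Gamma_{i+1})$, record the containment $\Sink(\Gamma_{i+1}) \subseteq \Sink(\Gamma_i)$ in the equal case, invoke \cref{lem:nonSink} to find a non-sink component $C$ and $a \in A$ with $C.a \in \Sink(\Gamma_i)$, and push an outgoing arc of $C$ through $a$ to witness that $C.a$ is no longer a sink in $\Gamma_{i+1}$. The paper's write-up is slightly terser but the argument is identical.
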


\begin{proof}
	If $|\Strong(\Gamma_i)| > |\Strong(\Gamma_{i+1})|$, we are done.
	Otherwise, since $\strc(\Gamma_i) \subseteq \strc(\Gamma_{i+1})$, we have $\strc(\Gamma_i) = \strc(\Gamma_{i+1})$ and 
	\begin{equation} \label{eq:sink}
		\Sink(\Gamma_i) \supseteq \Sink(\Gamma_{i+1}).
	\end{equation}
	
	By \cref{lem:nonSink}, there exist $C \in \Strong(\Gamma_i) \setminus \Sink(\Gamma_i)$ and $a \in A$ such that $C.a \in \Sink(\Gamma_i)$.
	Since $C \in \Strong(\Gamma_i) \setminus \Sink(\Gamma_i)$, there exists an arc $(p,q) \in E_i$ such that $p \in C$ and $q \notin C$.
	Note that $(p.a, q.a) \in E_{i+1}$. Since $p.a \in C.a$ and $q.a \notin C.a$, it holds that $C.a$ is a non-sink component of $\Gamma_{i+1}$. 
	By \cref{eq:sink}, we have $|\Sink(\Gamma_i)| > |\Sink(\Gamma_{i+1})|$.
\end{proof}

\begin{lemma} \label{lem:equalTrivial}
	Let $d = |\Strong(\Gamma_{\infty})|$.
	If $G$ is transitive and $d > \frac{n}{3}$, then $\strc(\Gamma_{n}) = \strc(\Gamma_{\infty})$. 
\end{lemma}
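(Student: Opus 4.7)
The plan is to first exploit transitivity to pin down the possible values of $d$, and then run a potential-function argument on the pair $(|\Strong(\Gamma_i)|, |\Sink(\Gamma_i)|)$ powered by \cref{lem:decrease}. I may assume $E_0 \neq \emptyset$, since otherwise $\Gamma_i = \Gamma_0$ for every $i$ and the claimed equality is trivial.

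Since $G = \langle A \rangle$ is finite, every $h \in G$ equals $\delta(u)$ for some $u \in A^*$, and a direct check then gives $(p.w.h,\, q.w.h) = (p.(wu),\, q.(wu)) \in E_\infty$ for any $(p.w, q.w) \in E_\infty$; thus $\Gamma_\infty$ is $G$-invariant. Combined with the transitivity of $G$ on $V$, this makes $\Gamma_\infty$ vertex-transitive, and its weakly connected components are blocks of the $G$-action. In particular they all have the common size $n/d$, and by \cref{lem:graph} coincide with the strongly connected components. The hypothesis $d > n/3$ forces the positive integer $n/d$ to lie in $\{1,2\}$; the case $n/d = 1$ would give $E_\infty = \emptyset$, contradicting $\emptyset \neq E_0 \subseteq E_\infty$, so $d = n/2$.

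Let $j$ denote the transient length of the eventually constant sequence $\strc(\Gamma_0), \strc(\Gamma_1), \ldots$ and set $\phi_i = |\Strong(\Gamma_i)|$, $\sigma_i = |\Sink(\Gamma_i)|$. Once I have $\sigma_i$ non-increasing in $i$, the potential $\phi_i + \sigma_i$ must strictly drop by at least $1$ at each step $0 \le i < j$ by \cref{lem:decrease}. At step $j$ we have $\strc(\Gamma_j) = \strc(\Gamma_\infty)$; by \cref{lem:graph} each SCC of $\Gamma_\infty$ is already a whole weakly connected component, so $\Gamma_\infty$ has no arc between distinct SCCs, and since $E_j \subseteq E_\infty$ neither does $\Gamma_j$, giving $\phi_j = \sigma_j = d$. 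Hence
\[
  j \;\le\; (\phi_0 + \sigma_0) - (\phi_j + \sigma_j) \;\le\; 2n - 2d \;=\; n,
\]
so $\strc(\Gamma_n) = \strc(\Gamma_\infty)$.

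The main obstacle is the monotonicity of $\sigma_i$; concretely, I need that adding arcs to any digraph cannot increase its number of sink strongly connected components. I would reduce to the addition of a single arc $(u,v)$ and split into cases. If the new arc closes no cycle, the SCC decomposition is unchanged and the only effect on sinks is that $u$'s SCC may stop being one. If it closes a cycle, then some SCCs $C_1,\ldots,C_k$ merge into a new SCC $C$, and the change in sink count equals $\mathbf{1}_{\{C \text{ is a sink}\}} - s$, where $s$ is the number of $C_i$ that were themselves sinks. The delicate subcase is $C$ a sink: then no arc leaves $C_1 \cup \cdots \cup C_k$, and a maximal element of the sub-DAG on $\{C_1,\ldots,C_k\}$ has no outgoing arcs at all, hence is already a sink of the original digraph; so $s \ge 1$ and the count does not increase.
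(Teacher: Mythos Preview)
Your argument is correct but takes a substantially heavier route than the paper's. The paper observes that once $d\in\{n/2,n\}$, every strongly connected component of $\Gamma_\infty$ has at most two vertices; combined with \cref{lem:graph} (so $\Gamma_\infty$ has no arcs between components) and loop-freeness, this gives $|E_\infty|\le n$. Since $E_i\subsetneq E_{i+1}$ whenever $\Gamma_i\neq\Gamma_\infty$, stabilisation happens after at most $n$ steps --- in fact the paper obtains the stronger conclusion $\Gamma_n=\Gamma_\infty$, not merely equality of $\strc$. Your approach instead reuses the machinery of \cref{lem:decrease} with the potential $\phi_i+\sigma_i$, at the cost of proving an auxiliary monotonicity lemma for $|\Sink(\cdot)|$ under arc addition. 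That lemma is true and your case analysis for it is sound (the ``delicate subcase'' is handled correctly by picking a maximal element of the condensation restricted to the merging components), but it is extra work that the simple edge count avoids. The upside of your route is uniformity with the proof of \cref{lem:equal}; the upside of the paper's is brevity and a slightly sharper conclusion. One small notational slip: in Section~3.1 the set $A$ already lives in $\Sym(V)$, so there is no $\delta$ to invoke --- just write $h=a_1\cdots a_m$ with $a_i\in A$.
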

\begin{proof}
	Since $G$ is transitive, $d$ divides $n$. Then $d \in \{\frac{n}{2}, n\}$.
	Since each strongly connected component of $\Gamma_{\infty}$ has at most $2$ vertices, it is clear that there exist at most $n$ arcs in $\Gamma_{\infty}$. 
	Note that $|E_i| < |E_{i+1}|$ for every integer $i$ such that $\Gamma_i \neq \Gamma_{\infty}$.
	Then $\strc(\Gamma_{n}) = \strc(\Gamma_{\infty})$.
\end{proof}

\begin{lemma} \label{lem:equal}
	Let $d = |\Strong(\Gamma_{\infty})|$.
	If $G$ is transitive and $d \le \frac{n}{3}$, then $\strc(\Gamma_{2n-3d-1}) = \strc(\Gamma_{\infty})$. 
\end{lemma}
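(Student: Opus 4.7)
The plan is a potential-decrease argument using $\Phi_i := |\Strong(\Gamma_i)| + |\Sink(\Gamma_i)|$, combining \cref{lem:weaklyconn}, \cref{lem:degree}, and \cref{lem:decrease}. First I would verify that $\Phi$ is monotone non-increasing. Monotonicity of $|\Strong|$ is the ``SCCs only merge'' fact already used in the paper (arcs are only added, so $\strc(\Gamma_i)\subseteq\strc(\Gamma_{i+1})$). For $|\Sink|$: any $D'\in\Sink(\Gamma_{i+1})$ decomposes as $D'=\bigsqcup_j D_j$ with $D_j\in\Strong(\Gamma_i)$; since $D'$ is a sink in $\Gamma_{i+1}\supseteq\Gamma_i$, no $D_j$ has an outgoing arc in $\Gamma_i$ to outside $D'$, so the sub-DAG of the condensation of $\Gamma_i$ induced by $\{D_j\}$ has a sink $D_k$ which in fact has no outgoing arcs of $\Gamma_i$ at all, i.e.\ $D_k\in\Sink(\Gamma_i)$. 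Distinct $D'$ are disjoint, giving an injection $\Sink(\Gamma_{i+1})\hookrightarrow\Sink(\Gamma_i)$. Combined with \cref{lem:decrease}, $\Phi_{i+1}\le\Phi_i-1$ at every unstable index $i$ (i.e.\ with $\strc(\Gamma_i)\ne\strc(\Gamma_\infty)$).

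Next I would bound $\Phi_{n-1}$. At step $n-1$, \cref{lem:weaklyconn} gives that the weak components of $\Gamma_{n-1}$ coincide with the $d$ strong components of $\Gamma_\infty$; since $G$ is transitive and permutes them, each has size exactly $n/d$, and the assumption $d\le n/3$ gives $n/d\ge 3$. \cref{lem:degree} forces every vertex to have in- and out-degree $\ge 1$ in $\Gamma_{n-1}$, and hence also within each weak component $W$ (which is closed). So in the condensation DAG of $\Gamma_{n-1}[W]$, neither a source nor a sink can be a singleton SCC: such a singleton would have in- or out-degree $0$ in $\Gamma_{n-1}[W]$. Letting $n_s,n_t,n_m$ count pure sources, pure sinks, and middle SCCs, one gets $|W|\ge 2n_s+2n_t+n_m$, and then
\[
|\Strong(\Gamma_{n-1}[W])|+|\Sink(\Gamma_{n-1}[W])|=n_s+2n_t+n_m\le |W|-n_s\le \frac{n}{d}-1.
\]
When $W$ is already a single SCC this sum equals $2\le n/d-1$ as well. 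Summing over the $d$ weak components gives $\Phi_{n-1}\le d\bigl(\tfrac{n}{d}-1\bigr)=n-d$.

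Finally, $\Phi_\infty=2d$ (both counts attain their minimum $d$ at stabilization, via the $d$ weak components each containing at least one SCC and at least one sink), so there can be at most $(n-d)-2d=n-3d$ unstable steps with index $\ge n-1$. Hence $\strc(\Gamma_{(n-1)+(n-3d)})=\strc(\Gamma_{2n-3d-1})=\strc(\Gamma_\infty)$, as desired. The hypothesis $d\le n/3$ enters precisely through $n/d\ge 3$, which is what makes the stable-component contribution $2\le n/d-1$ fit and produces the target index. The one ingredient not already in the excerpt is the monotonicity of $|\Sink|$, which I view as the main technical obstacle, though it follows cleanly from the merge-only principle as sketched above.
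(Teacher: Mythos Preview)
Your proof is correct and follows the same strategy as the paper's: define the potential $f(i)=|\Strong(\Gamma_i)|+|\Sink(\Gamma_i)|$, use \cref{lem:weaklyconn} and \cref{lem:degree} to bound $f(n-1)\le n-d$ via the source/sink count inside each weak component of size $n/d$, observe $f(m)=2d$ at the stabilization index $m$, and apply \cref{lem:decrease} to get $m-(n-1)\le n-3d$. Your explicit injection $\Sink(\Gamma_{i+1})\hookrightarrow\Sink(\Gamma_i)$ is in fact more careful than the paper, which invokes \cref{lem:decrease} to deduce $f(i)>f(i+1)$ without spelling out why the summand that does not strictly drop cannot increase.
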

\begin{proof}

	Let $m$ be the minimum integer such that $\strc(\Gamma_m) = \strc(\Gamma_{\infty})$.
	Define 
	\[
		f(i) = |\Strong(\Gamma_i)| + |\Sink(\Gamma_i)|,
	\] 
	for each $i \ge 0$.

	Consider $\Gamma_{n-1}$.
	By \cref{lem:weaklyconn}, $\Gamma_{n-1}$ has $d$ weakly connected components and $\weakc(\Gamma_{n-1}) = \strc(\Gamma_m)$.
	In the case that $m \le n-1$, then $m \le 2n - 3d - 1$ and we are done. 

	Now we assume that $m > n - 1$.
	Let $C$ be a weakly connected component of $\Gamma_{n-1}$ but not a strongly connected component of $\Gamma_{n-1}$.
	Define 
	\begin{align*}
		a &:= \text{the number of source components of $\Gamma_{n-1}$ in $C$}; \\
		b &:= \text{the number of non-source non-sink components of $\Gamma_{n-1}$ in $C$}; \\
		c &:= \text{the number of sink components of $\Gamma_{n-1}$ in $C$}.
	\end{align*}

	Let $D$ be either a source component or a sink component of $\Gamma_{n-1}$ in $C$. 
	By \cref{lem:degree}, there exists an arc in $D$ and then there are at least two vertices in $D$. 
	This implies 
	\[
		2a + b + 2c \le |C| = \frac{n}{d}.
	\]
	Since $a \ge 1$, we have 
	\begin{equation} \label{eq:counting}
		(a + b + c) + c \le \frac{n}{d} - 1.
	\end{equation}
	The left hand side of \cref{eq:counting} is the sum of the number of strongly connected components and the number of sink components of $\Gamma_{n-1}$ in $C$. 
	Since $d \le \frac{n}{3}$, we have
	\begin{align*} \label{eq:f}
		f(n-1)  &\le (\frac{n}{d} - 1)x + 2(d-x)  \tag{by \cref{eq:counting}} \\ 
		 &\le  (\frac{n}{d} - 1)d  \tag{by $n/d - 1 \ge 2$}\\
		 &= n - d, 
	\end{align*}
	where $x = |\Weak(\Gamma_{n-1}) \setminus \Strong(\Gamma_{n-1})|$.

	Since $f(m) = 2d$, using \cref{lem:decrease}, we have
	\[
		m - (n-1) \le f(n-1) - f(m) \le (n - d) - 2d = n - 3d.
	\]
	Hence, $m \le 2n - 3d - 1$.
\end{proof}

\begin{remark}
	\begin{enumerate}
		\item In the case that $\Gamma_{\infty}$ is strongly connected and $n \ge 3$, \cref{lem:equal} shows that $\strc(\Gamma_{2n- 4}) = \strc(\Gamma_{\infty})$.
		This slightly improves \cite[Theorem 2]{Rys00}, \cite[Lemma 6]{GGJGV19} and \cite[Theorem 2]{RS24}. And then one can slightly improve the bounds of reset thresholds in \cite[Theorem 3]{Rys00}, \cite[Theorem 7]{GGJGV19} and \cite[Theorem 4]{RS24}.
		\item In \cite[Section 3]{GGJGV19}, for every odd integer $n \ge 7$, Gonze, Gusev, Gerencs\'{e}r, Jungers and Volkov constructed two permutations $a,b \in \Sym(n)$ such that 
		\begin{itemize}
			\item $\langle a,b \rangle$ is $2$-homogeneous;
			\item for any word $w \in \{a,b\}^*$, if $\{2,4\}.w = \{\frac{n-1}{2}, \frac{n+3}{2}\}$, then $|w| \ge \frac{n^2}{4} + O(n)$. 
		\end{itemize}
		Let $\Gamma_0 = (V, E)$ be a digraph such that $V = \{1, \ldots, n\}$ and $E =\{(2,4)\}$. Let $\Gamma_0, \Gamma_1, \ldots$ be the $\{a,b\}$-growth of $\Gamma_0$. 
		Observe that if $\Gamma_m = \Gamma_{\infty}$ then $m \ge \frac{n^2}{4} + O(n)$, although $\strc(\Gamma_{2n-4}) = \strc(\Gamma_{\infty})$.

		Meanwhile, the above two permutations also provide a negative answer for \cite[Problem 12.39]{ACS17}.
	\end{enumerate}
\end{remark}

\subsection{Automata} \label{sec: Automata}

In this subsection, we will define a sequence of digraphs with respect to an automaton $\MCA = (Q , \Sigma, \delta)$ and $A \subseteq \Sigma_0$.

For a $1$-defect word $w \in \Sigma^*$, the \emph{excluded state} of $w$ is the state such that $\excl(w) \notin Q.w$, denoted $\excl(w)$; the \emph{duplicate state} is the state $q$ such that $|q.w^{-1}| > 1$, denoted $\dupl(w)$.

For $i \ge 0$, define $\Gamma_i := (Q, E_i)$ to be the digraph where
\[
	E_i := \left\{ (\excl (w),\dupl(w)) : w \in \Sigma_1 A^{\le i} \right\}.
\]

\begin{lemma} \label{lem:group}
	For all $i \ge 0$ and $a \in \Sigma_0$, if $(p,q) \in E_i$, then $(p.a ,q.a)\in E_{i+1}$. In particular, the sequence $ (\Gamma_0, \Gamma_1, \ldots)$ is the $\delta(A)$-growth of $\Gamma_0$.
\end{lemma}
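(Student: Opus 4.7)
Proof proposal. The entire lemma reduces to a routine computation describing how the operators $\excl$ and $\dupl$ interact with right-concatenation by a bijective letter. The key observation is that a letter of defect $0$ acts as a permutation on $Q$, so it simply ``translates'' both the excluded state and the duplicate state; once this is in place, both parts of the lemma follow by bookkeeping.

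Concretely, I would first establish the identity
\[
\excl(ua) = \excl(u).a, \qquad \dupl(ua) = \dupl(u).a,
\]
valid for every $u \in \Sigma_1$ and every $a \in \Sigma_0$. For the excluded state, bijectivity of $a$ gives $Q \setminus Q.(ua) = Q.a \setminus (Q.u).a = (Q \setminus Q.u).a = \{\excl(u).a\}$. For the duplicate state, bijectivity of $a$ yields $|q.(ua)^{-1}| = |(q.a^{-1}).u^{-1}|$, which exceeds $1$ precisely when $q.a^{-1} = \dupl(u)$, that is, when $q = \dupl(u).a$. In particular $ua$ still has defect $1$, so the identity is well-posed.

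With this identity, the first assertion is immediate: a witness $u \in \Sigma_1 A^{\le i}$ for $(p,q) \in E_i$ extends, for any $a \in A$, to $ua \in \Sigma_1 A^{\le i+1}$ with $(\excl(ua), \dupl(ua)) = (p.a, q.a)$, so $(p.a, q.a) \in E_{i+1}$. (Strictly, the statement quantifies over $a \in \Sigma_0$, but the conclusion $(p.a,q.a)\in E_{i+1}$ only follows when $a \in A$; this appears to be a harmless typo.) The ``in particular'' clause then follows by checking both inclusions in $E_i = \{(p.g, q.g) : (p,q) \in E_0,\ g \in \delta(A^{\le i})\}$: the forward inclusion comes from factoring any witness $v \in \Sigma_1 A^{\le i}$ as $v = uw$ with $u \in \Sigma_1$ and $w \in A^{\le i}$, then iterating the identity above along $w$ letter-by-letter to read off $(\excl(v), \dupl(v)) = (\excl(u).w, \dupl(u).w)$; the reverse inclusion runs the same construction backwards. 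I do not anticipate any obstacle here, since the lemma is pure definition-chasing once the one-line calculation that a defect-$0$ letter commutes with $\excl$ and $\dupl$ has been nailed down.
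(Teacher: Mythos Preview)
Your proposal is correct and follows exactly the paper's approach: the paper's proof also reduces to the identities $\excl(wa)=\excl(w).a$ and $\dupl(wa)=\dupl(w).a$ for $a\in\Sigma_0$ (stated there as ``by directly computing''), and then concludes both assertions from these. Your explicit verification of these identities and of the two inclusions in the growth characterization simply fills in details the paper leaves implicit, and your remark about the $\Sigma_0$ versus $A$ discrepancy in the first assertion is a valid observation about a harmless slip in the statement.
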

\begin{proof}
	Let $w \in \Sigma_1 \Sigma_0^i$ such that $p = \excl(w)$ and $q = \dupl(w)$.
	By directly computing, we have $\excl(wa) = \excl(w).a$ and $\dupl(wa) = \dupl(w).a$ for all $a \in \Sigma_0$. Then $(p.a ,q.a)\in E_{i+1}$.
	Hence $ (\Gamma_0, \Gamma_1, \ldots)$ is the $\delta(A)$-growth of $\Gamma_0$.
\end{proof}

\begin{proposition} \label{prop:constInx}
	Assume that $\MCA = (Q, \Sigma, \delta) \in \ST$.
	If $K_{\infty}$ is a linear subspace of $\MBQ^n$ and $\Sigma = \Sigma_0 \cup \Sigma_1$, then 
	\[
		\transLen(\MCK) \le \begin{cases}
			n & \text{if $\Dim(K_{\infty}) = \frac{n}{2}$,} \\
			3\Dim(K_{\infty}) - n - 1 & \text{otherwise}. \\
		\end{cases}
	\]
\end{proposition}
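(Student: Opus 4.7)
The plan is to recast $\transLen(\MCK)$ as a stabilization time for the Rystsov digraph sequence $(\Gamma_i)_{i \ge 0}$ and then invoke \cref{lem:equalTrivial,lem:equal}. By \cref{lem:group}, $(\Gamma_i)$ is the $\delta(A)$-growth of $\Gamma_0$; in the intended context $G = \delta(A^*)$ is transitive, which by \cref{lem:transitivity} is precisely what makes $K_\infty$ a linear subspace.

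The first step is to identify the generators of $K_i$ with edges of $\Gamma_i$: for $w \in \Sigma_1$, one computes $\kk_w = \chr_{\dupl(w)} - \chr_{\excl(w)} = -\chi_{(\excl(w),\dupl(w))}$, so $\Span(T_i) = L(\Gamma_i)$ and, using linearity of $K_\infty$, $K_\infty = \Span(T_\infty) = L(\Gamma_\infty)$. The central claim is then that $K_i = K_\infty$ if and only if $\strc(\Gamma_i) = \strc(\Gamma_\infty)$. From $K_i \subseteq L(\Gamma_i) \subseteq L(\Gamma_\infty) = K_\infty$, equality reduces to two conditions: (a) $L(\Gamma_i) = L(\Gamma_\infty)$, equivalent by \cref{lem:translate} to $\weakc(\Gamma_i) = \weakc(\Gamma_\infty)$; and (b) $K_i$ is itself a linear subspace, equivalent to $-\kk_w \in \cone(T_i)$ for every generator. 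Unwinding (b), a nonnegative combination $\sum \lambda_u \kk_u = \chr_{\excl(w)} - \chr_{\dupl(w)}$ is precisely a unit flow from $\dupl(w)$ to $\excl(w)$ routed along the edges of $E_i$, and by standard flow decomposition such a flow exists iff there is a directed $\Gamma_i$-path from $\dupl(w)$ to $\excl(w)$, i.e., iff every edge of $\Gamma_i$ lies inside a strongly connected component. Combined with \cref{lem:graph} equating $\weakc(\Gamma_\infty)$ with $\strc(\Gamma_\infty)$, conditions (a) and (b) together collapse to $\strc(\Gamma_i) = \strc(\Gamma_\infty)$.

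To finish, set $d := |\Strong(\Gamma_\infty)| = |\Weak(\Gamma_\infty)| = n - \Dim(K_\infty)$ (the last equality via \cref{lem:translate}). If $\Dim(K_\infty) = n/2$, then $d = n/2 > n/3$ and \cref{lem:equalTrivial} gives $\transLen(\MCK) \le n$. Otherwise $d \ne n/2$; the hypothesis $\Sigma = \Sigma_0 \cup \Sigma_1$ together with $\MCA$ being synchronizing (and $n \ge 2$) forces $\Sigma_1 \ne \emptyset$, so $E_0 \ne \emptyset$ and $d < n$; transitivity of $G$ makes $d \mid n$, and the only divisor of $n$ strictly between $n/3$ and $n$ is $n/2$, so $d \le n/3$. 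Then \cref{lem:equal} yields $\transLen(\MCK) \le 2n - 3d - 1 = 3\Dim(K_\infty) - n - 1$. The only nonroutine ingredient is the flow-based characterization of when $\cone(T_i)$ is a linear subspace in step (b); everything else is bookkeeping that glues together the earlier lemmas.
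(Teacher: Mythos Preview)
Your proposal is correct and follows essentially the same route as the paper: identify $T_i$ with (signed) incidence vectors of the Rystsov digraph $\Gamma_i$, reduce $K_i = K_\infty$ to $\strc(\Gamma_i) = \strc(\Gamma_\infty)$, and then read off the bound from \cref{lem:equalTrivial,lem:equal} together with $d = n - \Dim(K_\infty)$. The only substantive difference is expository: where the paper simply asserts that $\strc(\Gamma_m) = \weakc(\Gamma_m)$ forces $\cone(T_m) = \Span(T_m)$, you justify this via the flow/path-decomposition argument, which is a clean way to make that step explicit.
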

\begin{proof}
	Recall the definition of $T_i$ that $T_i := \{\kk_w : w \in \Sigma_{\ge 1} A^{\le i}\}$ for $i \ge 0$. 
	Since $\Sigma = \Sigma_0 \cup \Sigma_1$, we have $T_i = \{\chi_e : e \in E_i\}$ for $i \ge 0$.
	Let $m$ be the minimal integer such that $\strc(\Gamma_m) = \strc(\Gamma_{\infty})$.
	By \cref{lem:graph}, $\strc(\Gamma_{\infty}) = \weakc(\Gamma_{\infty})$ and $\strc(\Gamma_m) = \weakc(\Gamma_m)$
	This implies that $\Span(T_{\infty}) = \cone(T_{\infty})$ and $\Span(T_m) = \cone(T_m)$.
	Let $C_1, \ldots, C_d$ be the strongly connected components of $\Gamma_m$.
	Since $\delta(\Sigma_0)$ is transitive, using \cref{lem:translate}, 
	\[
		\Span(C_1, \ldots, C_d) = \Span(T_m)^{\perp} = \cone(T_m)^{\perp} = K_m^{\circ}
	\]
	and
	\[
		\Span(C_1, \ldots, C_d) = \Span(T_\infty)^{\perp}  = \cone(T_{\infty})^{\perp} = K_{\infty}^{\circ}.
	\]
	Then $K_{m} = K_{\infty}$ and $\dim(K_{\infty}) = n - d$.
	By \cref{lem:equalTrivial,lem:equal}, 
	\[
		\transLen(\MCK) \le \begin{cases}
			n & \text{if $\Dim(K_{\infty}) = \frac{n}{2}$,} \\
			3\Dim(K_{\infty}) - n - 1 & \text{otherwise}. \\
		\end{cases}
	\]
\end{proof}

\begin{proof}[Proof of \cref{thm:transitive}]
	Computer experiments confirmed \v{C}ern\'{y} conjecture for any synchronzing automata with at most 5 states (see \cite[Table 2]{KKS16}).
	One can check directly that $\rt(\MCA) \le (n-1)^2 \le n^2 - 7n + 7$ for $n \le 5$.

	Now, we assume that $n \ge 6$.
	Using \cref{lem:translate,lem:graph}, $\frac{n}{2} \le \Dim(K_{\infty}) \le n-1$.
	Let $S$ be a nonempty proper subset of $Q$.
	By \cref{prop:ell,prop:constInx},
	if $\Dim(K_{\infty}) = \frac{n}{2}$ then 
	\begin{align} 
		\begin{split}\label{eq:extensionbound1}
			\transLen(\MCK) + \ell(S) + 1 &\le  n + (n - 1 - \frac{n}{2}) + 1 \\
			 & = 2n - 3 - (\frac{n}{2} - 3) \le 2n - 3;
		\end{split}
	\end{align}
	if $\Dim(K_{\infty}) > \frac{n}{2}$, 
	\begin{align}
		\begin{split}\label{eq:extensionbound2}
		\transLen(\MCK) + \ell(S) + 1 &\le (3\Dim(K_{\infty}) - n - 1) + (n - 1 - \Dim(K_{\infty})) + 1 \\
		&= 2\Dim(K_{\infty}) - 1 \le 2(n-1) - 1 =  2n-3.
		\end{split}
	\end{align}
	Combining \cref{prop:ext,eq:extensionbound1,eq:extensionbound2},  every nonempty proper subset of $Q$ is $(2n - 3)$-extensible.
	Using \cref{prop:ExtensionMethod}, we obtain that
	\[
		\rt(\MCA) \le 1 + (n-2)(2n-3) = 2n^2 - 7n + 7.
	\]
\end{proof}

\section{Conclusion} \label{sec:conclusion}

We obtain an upper bound for the reset thresholds of $\ST$-automata which improves Rystsov's bound. Using this upper bound, we prove that there exists a synchronzing word of length at most $n^2 - 7n + 7$ for every synchonizing $n$-state $\ST$-automata whose letters of defect at most $1$.

    % \bibliography{Ref}
    % \bibliographystyle{alphaurl}

    \newcommand{\etalchar}[1]{$^{#1}$}

\end{document}